\newtheorem{Property}{Property}[section]
\newtheorem{Definition}{Definition}
\newtheorem{Proposition}{Proposition}
\newtheorem{theo}{theorem}
\newcommand{\ie}{\textit{i}.\textit{e}.}
\newcommand{\etal}{\textit{et al}.}
\begin{document}

\title{Convergence of Multi-Agent Learning with a Finite Step Size in General-Sum Games}  



\author{Xinliang Song}
\affiliation{%
  \institution{Tsinghua University}
 \city{Beijing} 
 \country{China} 
}
\email{songxinliang@outlook.com}
\author{Tonghan Wang}
\affiliation{%
  \institution{Tsinghua University}
 \city{Beijing} 
 \country{China} 
}
\email{tonghanwang1996@gmail.com}
\author{Chongjie Zhang}
\affiliation{%
  \institution{Tsinghua University}
  \city{Beijing} 
  \country{China}}
\email{chongjie@tsinghua.edu.cn}

\begin{abstract}  
	Learning in a multi-agent system is challenging because agents are simultaneously learning and the environment is not stationary, undermining convergence guarantees. To address this challenge, this paper presents a new gradient-based learning algorithm, called Gradient Ascent with Shrinking Policy Prediction (GA-SPP), which augments the basic gradient ascent approach with the concept of shrinking policy prediction. The key idea behind this algorithm is that an agent adjusts its strategy in response to the forecasted strategy of the other agent, instead of its current one. GA-SPP is shown formally to have Nash convergence in larger settings than existing gradient-based multi-agent learning methods. Furthermore, unlike existing gradient-based methods, GA-SPP's theoretical guarantees do not assume the learning rate to be infinitesimal.
\end{abstract}

%

\keywords{Multi-Agent Learning; Nash Equilibrium; Convergence; Finite Step Size}  

\maketitle


\section{Introduction}
Multi-agent learning (MAL) is concerned with a set of agents that learn to maximize their expected rewards. There are a number of important applications that involve MAL, including competitive settings such as self-play in AlphaZero~\cite{silver2017mastering} and generative adversarial networks in deep learning~\cite{goodfellow2014generative, Metz2016Unrolled}, cooperative settings such as when learning to communicate~\cite{foerster2016learning,sukhbaatar2016learning} and multiplayer game~\cite{foerster2017counterfactual}, or some mix of the two~\cite{tampuu2017multiagent,leibo2017multi}. Although promising empirical results, establishing a theoretical guarantee of convergence for MAL, especially for gradient-based methods, is fundamentally challenging because of its non-stationary environment.

Recent multi-agent learning (MAL) algorithms~\cite{crandall2011learning,crandall2014towards,prasad2015two,bošanský2016algorithms,meir2017iterative,damer2017safely} with satisfactory empirical results are proposed, but most of them do not provide theoretical analyses of convergence. There are only a few worksthat provide theoretical results before them. Singh \etal~\cite{singh2000nash} first consider the theoretical convergence of gradient-based methods in MAL. After that, several variants~\cite{singh2000nash,bowling2005convergence,Banerjee2007Generalized,abdallah2008multiagent,zhang2010multi} are proposed and they provide theoretical convergence in general-sum games, but theoretical guarantees are restricted in 2-agent, 2-action games and they assume that the learning rate is infinitesimal, which is not practical. Some other online learning algorithms~\cite{daskalakis2011near,krichene2015online,cohen2017learning} have also been proposed with theoretical guarantees, but just for specific settings, such as congestion games and potential games.


In this paper, we propose a new multi-agent learning algorithm that augments a basic gradient ascent algorithm with \emph{shrinking} policy prediction, called Gradient Ascent with Shrinking Policy Prediction (GA-SPP). The key idea behind this algorithm is that an agent adjusts its strategy in response to the forecasted strategy of the other agent, instead of its current one. This paper makes three major novelties. First, to our best knowledge, GA-SPP is the first gradient-ascent MAL algorithm with a finite learning rate that provides convergence guarantee in general-sum games. Second, GA-SPP provides convergence guarantee in larger games than existing gradient-ascent MAL algorithms, which include $m\times n$ positive semi-definite games, a class of $2\times n$ general-sum games, and $2 \times 2$ general-sum games. Finally, GA-SPP guarantees to converge to a Nash Equilibrium when converging in any $m\times n$ general-sum game. 

Although GA-SPP shares some similar ideas about using policy prediction with IGA-PP~\cite{zhang2010multi} and the extra-gradient method~\cite{Antipin2003Extragradient}, it has several major differences from them. For example, apart from using a finite step size, another significant difference between GA-SPP and IGA-PP is that forecasted strategies of the opponent are projected to the valid probability space. This improvement enables GA-SPP's Nash convergence when converging, which does not hold for IGA-PP. In contrast to the extra-gradient approach, GA-SPP uses shrinking prediction lengths which can be different from the policy update rate. This improvement makes GA-SPP not only more flexible in practice but also stronger in terms of theoretical guarantees.

Like IGA-PP, we assume that agents know the other agent's strategy and its current strategy gradient, but we do not require the learning rate to be infinitesimal. Even though GA-SPP needs some restricted assumptions, it pushes forward the state of the art of MAL with theoretical analysis. We expect that our work can shed a light for theoretical understanding of dynamics and complexity of MAL problems and like IGA-PP and WoLF-IGA \cite{bowling2001convergence}, can encourage broadly applicable multi-agent reinforcement learning algorithms. Our proposed learning algorithm also provides a different approach for computing Nash Equilibiria of subsets of larger games, other than well-established offline algorithms~\cite{Lemke1964Equilibrium,Porter2004Simple}, whose computation complexity increases sharply with the number of actions.


\subsection*{Notation}
We use following notations in this paper:\\
\indent $\bm{\Delta}$~denotes the valid strategy space (\ie, a simplex).\\
\indent $\Pi_{\bm{\Delta}}$: $\Re^n\rightarrow \bm{\Delta}$~denotes the convex projection to the valid space,
\[ \Pi_{\bm{\Delta}} [\bm{x}] = argmin_{\bm{z}\in\bm{\Delta} }\| \bm{x}- \bm{z} \|. \]
\indent $P_{\bm{\Delta}}(\bm{x},\ \bm{v})$~denotes the projection of a vector $\bm{v}$~on $\bm{x}\in\bm{\Delta}$,
	\[ P_{\bm{\Delta}} (\bm{x},\ \bm{v}) =\lim_{\eta\rightarrow 0}\frac {\Pi_{\bm{\Delta}}[\bm{x}+\eta\bm{v}]-\bm{x}}{\eta}.\]
\indent $(\bm{v_1};\bm{v_2})$~denotes~${\left( \begin{array}{c}\bm{v_1}\\\bm{v_2}\end{array} \right)}$, where $\bm{v_1},\ \bm{v_2}$~are column vectors.

\section{Gradient Ascent }
We begin with a brief overview of normal-form games and then review the basic gradient ascent algorithm.
\subsection{Normal-Form Games}
A 2-agent, $m\times n$ -action, general-sum normal-form game is defined by a pair of matrices 
\begin{gather*}
R=\begin{bmatrix} r_{11} & ... & r_{1n} \\... & ... & ...\\ r_{m1} & ... & r_{mn} \end{bmatrix}\quad \text{and}\quad
C=\begin{bmatrix} c_{11} & ... & c_{1n}  \\... & ... & ...\\ c_{m1} & ... & c_{mn} \end{bmatrix}
\end{gather*}
specifying the payoffs for the row agent and the column agent, respectively. The agents simultaneously select an action from their available set, and the joint action of the agents determines their payoffs according to their payoff matrices. If the row agent selects action $a\in \{ 1,\ ...,\ m\}$ and the column agent selects action $b\in \{ 1,\ ...,\ n\} $, respectively, then the row agent receives a payoff $r_{ab}$ and the column agent receives a payoff $c_{ab}$.

The agents can choose actions stochastically based on some probability distribution over their available actions. This distribution is said to be a mixed strategy. Let $\alpha_i \in [0,\ 1]$ denote the probability of choosing the i-th action by the row  agent and $\beta_j \in [0,\ 1]$ denote the probability of choosing the j-th action by the column agent, where $i \in  \{ 1,\ ...,\ m-1\},\ j \in  \{ 1,\ ...,\ n-1\}$, $\sum_1^{m-1} \alpha_i\leq 1$,  $\sum_1^{n-1} \beta_j\leq 1$. We use $\bm{\Delta_1}$ to denote a m-1 dimensional simplex and $\bm{\Delta_2}$ to denote a n-1 dimensional simplex. This $(m-1)\times(n-1)$ representation is equivalent to the $m\times n$ representation, and, following the previous work on gradient-based methods, we choose the former one. Let
\begin{equation}\begin{aligned}\nonumber
		\bm{\alpha}={[\alpha_1 \ ...\ \alpha_{m-1}]}^\mathrm{T},\qquad
		\bm{e_{m-1}}={[1 \ ...\ 1]}^\mathrm{T},\qquad \ \ \ \ \\
		\bm{\beta}={[\beta_1 \ ...\ \beta_{n-1}]}^\mathrm{T},\qquad\ \ 
		\bm{e_{n-1}}={[1 \ ...\ 1]}^\mathrm{T},\qquad \ \ \ \ \\
\end{aligned}\end{equation}
\noindent where the dimension of $\bm{e_{m-1}}$ is $m-1$, the dimension of $\bm{e_{n-1}}$ is $n-1$.

Then $\bm{\alpha}\in \bm{\Delta_1},\ \bm{\beta}\in \bm{\Delta_2}$.
With a joint strategy $(\bm{\alpha} ,\bm{\beta})$, the row agent's and column agent's expected payoffs are
\begin{equation}
\begin{aligned}
	\label{equa:VrVc}
	V_r(\bm{\alpha},\ &\bm{\beta})=(\bm{\alpha};\ 1-\bm{e_{m-1}}^\mathrm{T}\bm{\alpha})^\mathrm{T} \bm{R} (\bm{\beta};\ 1-\bm{e_{n-1}}^\mathrm{T}\bm{\beta}), \\
	V_c(\bm{\alpha},\ &\bm{\beta})=(\bm{\alpha};\ 1-\bm{e_{m-1}}^\mathrm{T}\bm{\alpha})^\mathrm{T} \bm{C} (\bm{\beta};\ 1-\bm{e_{n-1}}^\mathrm{T}\bm{\beta}).
\end{aligned}
\end{equation}

A joint strategy $(\bm{\alpha^*},\ \bm{\beta^*})$~is called a Nash equilibrium if for any mixed strategy $\bm{\alpha}$~of the row agent, $V_r(\bm{\alpha^*},\ \bm{\beta}^*)\geq V_r(\bm{\alpha},\ \bm{\beta}^*)$, and for any mixed strategy $\bm{\beta}$~of the column agent, $V_c(\bm{\alpha^*},\ \bm{\beta}^*)\geq V_c(\bm{\alpha^*},\ \bm{\beta})$.
It is well-known that every game has at least one Nash equilibrium.
\subsection{Learning using Gradient Ascent in Iterated Games}
In an iterated normal-form game, agents repeatedly play the same game. Each agent seeks to maximize its expected payoff in response to the strategy of the other agent. Using the basic gradient ascent algorithm, a agent can increase its expected payoff by updating its strategy with a step size along the gradient of the current strategy. The gradient is computed as the partial derivative of the agent's expected payoff with respect to its strategy:
\begin{flalign}\begin{aligned}
	\label{partialvrvc}
	&\bm{\partial_\alpha} V_r(\bm{\alpha},\ \bm{\beta})=\frac{\partial V_r(\bm{\alpha} ,\  \bm{\beta})}{\bm{\partial \alpha}}= (\bm{I_{m-1}}\ \bm{e_{m-1}})\bm{R} (\bm{\beta};\ 1-\bm{e_{n-1}}^\mathrm{T}\bm{\beta}), \\
	&\bm{\partial_{\beta}} V_c(\bm{\alpha},\ \bm{\beta})=\frac{\partial V_c(\bm{\alpha} ,\ \bm{\beta})}{\bm{\partial \bm\beta}}=(\bm{I_{n-1}}\ \bm{e_{n-1}})\bm{C}^\mathrm{T} (\bm{\alpha};\ 1-\bm{e_{m-1}}^\mathrm{T}\bm{\alpha}),
\end{aligned}\end{flalign}
where $\bm{I_{m-1}}$~is (m-1)-order identity matrix and $\bm{I_{n-1}}$~is (n-1)-order identity matrix.

If $(\bm{\alpha_k},\ \bm{\beta}_k)$~are the strategies on the $k$-th iteration and both agents use gradient ascent, then the new strategies will be:
\begin{equation}\begin{aligned}\label{eq:ga_update_rule}
	&\bm{\alpha_{k+1}}=\Pi_{\bm{\Delta_1}} [\bm{\alpha_k}+\eta \bm{\partial_\alpha} V_r(\bm{\alpha_k},\ \bm{\beta_k})], \\
	&\bm{\beta_{k+1}}=\Pi_{\bm{\Delta_2}} [\bm{\beta_k}+\eta \bm{\partial_{\beta}} V_c(\bm{\alpha_k},\ \bm{\beta_k})],
\end{aligned}\end{equation}
where $\eta$ is the gradient step size. If the updates move the strategies out of the valid probability space, the function $\Pi_{\bm{\Delta}}$~will project it back.

Singh~\etal~\cite{singh2000nash} analyzed the gradient ascent algorithm by examining the dynamics of the strategies in the case of an infinitesimal step size $(\lim_{\eta \to 0})$. This algorithm is called Infinitesimal Gradient Ascent (IGA). IGA cannot converge in some 2-agent 2-action zero-sum game. GIGA-WoLF and IGA-PP extended IGA and provide theoretical guarantee of Nash equilibrium in 2-agent 2-action game through similar methods. However, these algorithms require a infinitesimal step size, which is not practical. We will describe a new gradient ascent algorithm that enables the agents' strategies to converge to a Nash equilibrium with a finite step size in a larger game setting.
\section{Gradient Ascent With Shrinking Policy Prediction (GA-SPP)}
As shown in Eq.~\ref{eq:ga_update_rule}, the gradient used by IGA to adjust the strategy is based on current strategies. Suppose that an agent can estimate the change direction of the opponent's strategy, \ie, its strategy derivative, in addition to its current strategy. Then the agent can forecast the opponent's strategy and adjust its own strategy in response to the forecasted strategy. With this idea, we design a gradient ascent algorithm with shrinking policy prediction (GA-SPP). Its updating rule consists of three steps.

In Step 1, the new derivative terms with $\gamma$ serve as a short-term prediction of the opponent's strategy. If the opponent's forecasted strategy is out of boundary of simplex, it will be projected back to the valid space. 

In Step 2, agents update their strategies on the basis of the forecasted strategy of its opponent.

In Step 3, agents terminate or adjust their prediction lengths. If predicted strategies are equal to the current strategies, the algorithm will terminate. Step 3 can make sure GA-SPP only converges to Nash equilibrium (NE) instead of other points. Because when $(\bm{\alpha_{k+1}},\ \bm{\beta_{k+1}})=(\bm{\alpha_{k}},\ \bm{\beta_{k}})$, GA-SPP will stop, if there is no Step 3, then~$(\bm{\overline{\alpha}_{k+1}},\ \bm{\overline{\beta}_{k+1}})\ne(\bm{\alpha_{k}},\ \bm{\beta_{k}})$~may happen. In this situation, GA-SPP may converge to a non-NE point. We will prove this property of GA-SPP in Proposition 1.

\begin{algorithm}[h]
\caption{Updating rule of GA-SPP}\label{alg:gaspp}
1 $\bm{\overline{\alpha}_{k+1}}=\Pi_{\bm{\Delta_1}}[\bm{\alpha_k}+\gamma_k \bm{\partial_\alpha} V_r(\bm{\alpha_k},\ \bm{\beta_k})]$\;
$\ \ \ \bm{\overline{\beta}_{k+1}}= \Pi_{\bm{\Delta_2}}[\bm{\beta_k}+\gamma_k \bm{\partial_{\beta}} V_c(\bm{\alpha_k},\ \bm{\beta_k})]$\;

2 $\bm{\alpha_{k+1}}=\Pi_{\bm{\Delta_1}} [\bm{\alpha_k}+\eta \bm{\partial_\alpha} V_r(\bm{\alpha_{k}},\ \overline{\bm{\beta}}_{k+1})]$\; 
$\ \ \ \bm{\beta_{k+1}}=\Pi_{\bm{\Delta_2}} [\bm{\beta_k}+\eta \bm{\partial_{\beta}} V_c(\bm{\overline{\alpha}_{k+1}},\ \bm{\beta}_{k})];$\;

3

\eIf{$(\bm{\overline{\alpha}_{k+1}},\ \bm{\overline{\beta}_{k+1}})==(\bm{\alpha_k},\ \bm{\beta_k})$}{
    terminate\;
}
{
    \eIf{$(\bm{\alpha_{k+1}},\bm{\beta_{k+1}})==(\bm{\alpha_k},\bm{\beta_k})\ \&\ (\bm{\overline{\alpha}_{k+1}},\bm{\overline{\beta}_{k+1}})\ne(\bm{\alpha_k},\bm{\beta_k})$}{
        $\gamma_{k+1}=\mu_k\gamma_k,\ (0<\mu_k<1)$, back to (1)\;
    }
    {
        $\gamma_{k+1}=\gamma_k$, back to (1)\;
    }
}
\end{algorithm}

The prediction length $\gamma_k$ and gradient step size $\eta$ will affect the convergence of the GA-SPP algorithm. With a too large prediction length, the gradient computed with the forecasted strategy will deviate too much from the gradient computed with the opponent's current strategy. As a result, the agent may adjust its strategy in the improper direction and cause their strategies to fail to converge.

Following conditions ensure that $\gamma$ and $\eta$ are appropriate:\\
\indent\textbf{Condition 1:} \indent $\gamma_0>0$, $\eta>0$\\
\indent\textbf{Condition 2:} \indent $4\gamma_0^2\delta_r\delta_c<1$\\
\indent\textbf{Condition 3:} \indent $\eta,\ \gamma_0<\frac{1}{\delta_r+\delta_c}$\\
where $\delta_r=r_{max}-r_{min},\ \delta_c=c_{max}-c_{min}$,\ $r_{max}$ and $c_{max}$ is the maximum reward for the row and column agent,~$r_{min}$ and $c_{min}$ is the minimum reward for the row and column agent.

Condition 3 makes sure that the theoretical guarantee of Nash convergence in the game settings analyzed in Section~\ref{sec:convergence_of_ga-spp}. In experiments, the algorithm can still work in some other games if we choose larger prediction length or let agents have different prediction lengths. 


\subsection{Analysis of GA-SPP}
In this section, we will show that if agents' strategies converge by following GA-SPP, then they must converge to a Nash equilibrium, which is described by Proposition~\ref{prop1}. Using this proposition, we will then prove the Nash convergence of GA-SPP in three classes of games: $m\times n$ positive semi-definite games, a class of $2\times n$ general-sum games, and $2\times 2$ general-sum games, respectively, in the following sections.
 
Before proving Proposition~\ref{prop1}, we will first show that if the projected gradients of a strategy pair are zero, then this strategy must be a Nash equilibrium, which is described by Lemma~\ref{lemma1}. For brevity, let $\bm{\partial_{\alpha}}$ denotes $\bm{\partial_{\alpha}}V_r(\bm{\alpha},\ \bm{\beta})$, $\bm{\partial_{\beta}}$ denotes $\bm{\partial_{\beta}}V_r(\bm{\alpha},\ \bm{\beta})$.

\begin{lemma}
	\label{lemma1}
	In ($m\times n$)-action games, if the projected partial derivatives at a strategy pair $(\bm{\alpha^*},\ \bm{\beta^*})$ are zero, that is, $P_{\bm{\Delta_1}}(\bm{\alpha^*},\ \bm{\partial_{\alpha^*}}) = 0$ and $P_{\bm{\Delta_2}}(\bm{\beta^*},\ \bm{\partial_{\beta^*}}) =0$, then $(\bm{\alpha^*},\ \bm{\beta^*})$ is a Nash equilibrium. 
\end{lemma}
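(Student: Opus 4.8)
The plan is to exploit the fact that, with the opponent's strategy held fixed, each agent's expected payoff is an \emph{affine} function of its own strategy. Because of affineness, the vanishing of the projected gradient is not merely a stationarity condition but is exactly the (necessary \emph{and} sufficient) first-order characterization of a global maximizer over the simplex. I would translate each of the two hypotheses into a variational inequality on the respective simplex and then simply read off the two defining inequalities of a Nash equilibrium.

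First I would record the affine structure. Fixing $\bm{\beta^*}$, the vector $\bm{R}(\bm{\beta^*};\,1-\bm{e_{n-1}}^\mathrm{T}\bm{\beta^*})$ is constant and the map $\bm{\alpha}\mapsto(\bm{\alpha};\,1-\bm{e_{m-1}}^\mathrm{T}\bm{\alpha})$ is affine, so by Eq.~\eqref{equa:VrVc} the function $\bm{\alpha}\mapsto V_r(\bm{\alpha},\bm{\beta^*})$ is affine with constant gradient $\bm{\partial_{\alpha^*}}$. Consequently, for every $\bm{\alpha}\in\bm{\Delta_1}$ the first-order expansion is \emph{exact}, namely $V_r(\bm{\alpha},\bm{\beta^*})-V_r(\bm{\alpha^*},\bm{\beta^*})=\bm{\partial_{\alpha^*}}^\mathrm{T}(\bm{\alpha}-\bm{\alpha^*})$. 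The same holds, mutatis mutandis, for the column agent's payoff $\bm{\beta}\mapsto V_c(\bm{\alpha^*},\bm{\beta})$, which is affine with gradient $\bm{\partial_{\beta^*}}$.

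Next I would interpret the hypothesis $P_{\bm{\Delta_1}}(\bm{\alpha^*},\bm{\partial_{\alpha^*}})=0$. By its defining limit, $P_{\bm{\Delta_1}}(\bm{\alpha^*},\bm{\partial_{\alpha^*}})$ is the projection of $\bm{\partial_{\alpha^*}}$ onto the tangent cone of $\bm{\Delta_1}$ at $\bm{\alpha^*}$; its vanishing means $\bm{\partial_{\alpha^*}}$ lies in the polar (normal) cone, i.e. $\bm{\partial_{\alpha^*}}^\mathrm{T}\bm{d}\le 0$ for every feasible direction $\bm{d}$. Since $\bm{\Delta_1}$ is convex, for any $\bm{\alpha}\in\bm{\Delta_1}$ the chord $\bm{\alpha}-\bm{\alpha^*}$ is such a feasible direction (as $(1-t)\bm{\alpha^*}+t\bm{\alpha}\in\bm{\Delta_1}$ for $t\in[0,1]$), whence $\bm{\partial_{\alpha^*}}^\mathrm{T}(\bm{\alpha}-\bm{\alpha^*})\le 0$. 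Combining with the exact affine identity gives $V_r(\bm{\alpha},\bm{\beta^*})\le V_r(\bm{\alpha^*},\bm{\beta^*})$ for all $\bm{\alpha}\in\bm{\Delta_1}$; the symmetric argument applied to $P_{\bm{\Delta_2}}(\bm{\beta^*},\bm{\partial_{\beta^*}})=0$ yields $V_c(\bm{\alpha^*},\bm{\beta})\le V_c(\bm{\alpha^*},\bm{\beta^*})$ for all $\bm{\beta}\in\bm{\Delta_2}$, which are precisely the two Nash conditions.

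The main obstacle is the middle step: rigorously justifying that the limit defining $P_{\bm{\Delta_1}}$ coincides with the tangent-cone projection and that its vanishing is equivalent to the variational inequality $\bm{\partial_{\alpha^*}}^\mathrm{T}(\bm{\alpha}-\bm{\alpha^*})\le 0$ for all $\bm{\alpha}\in\bm{\Delta_1}$. This rests on the obtuse-angle (Moreau) characterization of the metric projection $\Pi_{\bm{\Delta_1}}$ onto a convex set, together with the fact that the directional limit of $\Pi_{\bm{\Delta_1}}$ recovers the projection onto the tangent cone. I would isolate this projection property as the single nontrivial ingredient and state it cleanly, after which everything else reduces to the routine affine-expansion and convexity bookkeeping above.
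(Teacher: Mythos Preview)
Your proposal is correct and is essentially the same argument as the paper's: both exploit the affineness of each agent's payoff in its own strategy together with the convexity of the simplex to equate ``zero projected gradient'' with ``no improving feasible direction,'' hence Nash. The only cosmetic difference is that the paper runs the contrapositive (an improving strategy yields, by convex combination, arbitrarily short improving directions and hence a nonzero projected gradient), whereas you argue directly via the tangent/normal-cone variational inequality; the substantive content is identical.
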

\begin{proof}
	Assume that $(\bm{\alpha^*},\ \bm{\beta^*})$ is not a Nash equilibrium. Then at least one agent, for example, the column agent, can increase its expected payoff by changing its strategy unilaterally. Assume that the improved point is $(\bm{\alpha^*},\ \bm{\beta})$. Because of the convexity of the strategy space $\bm{\Delta_2}$ and the linear dependence of $V_c(\bm{\alpha},\ \bm{\beta})$ on $\bm{\beta}$, then, for any $\epsilon > 0$, $(\bm{\alpha^*},\ (1- \epsilon )\bm{\beta^*} + \epsilon\bm{\beta})$ must also be an improved point, which implies that the projected gradient of $\bm{\beta}$ at $(\bm{\alpha^*},\ \bm{\beta^*})$ is not zero. By contradiction, $(\bm{\alpha^*},\ \bm{\beta^*})$ is a Nash equilibrium.
\end{proof}

\begin{Proposition}
	\label{prop1}
		In 2-agent, $m\times n$ games, if two agents follow GA-SPP with appropriate $\gamma,\ \eta$ (satisfying Condition 1, 2, and 3) and GA-SPP converges, then $(\bm{\alpha^*},\ \bm{\beta^*})$ is a Nash equilibrium.
\end{Proposition}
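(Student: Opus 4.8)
The plan is to reduce everything to Lemma~\ref{lemma1}: once I show that convergence of GA-SPP forces the projected partial derivatives at the limit point $(\bm{\alpha^*},\ \bm{\beta^*})$ to vanish, that lemma immediately certifies $(\bm{\alpha^*},\ \bm{\beta^*})$ as a Nash equilibrium. The technical bridge I would establish first is an elementary characterization of projection onto a closed convex set: for $\bm{x}\in\bm{\Delta}$, a direction $\bm{v}$, and any step $s>0$, the identity $\Pi_{\bm{\Delta}}[\bm{x}+s\bm{v}]=\bm{x}$ holds if and only if $s\bm{v}$ lies in the normal cone of $\bm{\Delta}$ at $\bm{x}$; since that cone is invariant under positive scaling, this is in turn equivalent to $P_{\bm{\Delta}}(\bm{x},\ \bm{v})=0$. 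This equivalence is exactly what lets me read the vanishing of a projected gradient off the updating rule, and I would use it in both directions.

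With this fact recorded, suppose GA-SPP converges, so its iterates reach an exact fixed point and, from some iteration onward, $(\bm{\alpha_{k+1}},\ \bm{\beta_{k+1}})=(\bm{\alpha_k},\ \bm{\beta_k})=(\bm{\alpha^*},\ \bm{\beta^*})$; this is the state detected by the equality tests of Step~3, which then leaves exactly two possibilities. In the easy case the terminate branch fires, so $(\bm{\overline{\alpha}_{k+1}},\ \bm{\overline{\beta}_{k+1}})=(\bm{\alpha^*},\ \bm{\beta^*})$. Reading off Step~1 gives $\Pi_{\bm{\Delta_1}}[\bm{\alpha^*}+\gamma_k\bm{\partial_{\alpha^*}}]=\bm{\alpha^*}$ and $\Pi_{\bm{\Delta_2}}[\bm{\beta^*}+\gamma_k\bm{\partial_{\beta^*}}]=\bm{\beta^*}$ with $\gamma_k>0$, and the projection fact converts these into $P_{\bm{\Delta_1}}(\bm{\alpha^*},\ \bm{\partial_{\alpha^*}})=0$ and $P_{\bm{\Delta_2}}(\bm{\beta^*},\ \bm{\partial_{\beta^*}})=0$, so Lemma~\ref{lemma1} applies.

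The hard part will be the remaining case, where the actual strategies have frozen at $(\bm{\alpha^*},\ \bm{\beta^*})$ but the forecast still differs, so Step~3 keeps shrinking the prediction length by a factor $\mu_k\in(0,1)$ and hence $\gamma_k\to 0$. Here I would pass to the limit inside Step~2. Because $\bm{\overline{\beta}_{k+1}}=\Pi_{\bm{\Delta_2}}[\bm{\beta^*}+\gamma_k\bm{\partial_{\beta^*}}]\to\bm{\beta^*}$ and symmetrically $\bm{\overline{\alpha}_{k+1}}\to\bm{\alpha^*}$ as $\gamma_k\to 0$, continuity of $\Pi_{\bm{\Delta}}$ together with the fact that the payoff gradients are affine (hence Lipschitz) in the strategies turns the frozen identities $\bm{\alpha_{k+1}}=\bm{\alpha^*}$ and $\bm{\beta_{k+1}}=\bm{\beta^*}$ into $\Pi_{\bm{\Delta_1}}[\bm{\alpha^*}+\eta\,\bm{\partial_{\alpha^*}}]=\bm{\alpha^*}$ and $\Pi_{\bm{\Delta_2}}[\bm{\beta^*}+\eta\,\bm{\partial_{\beta^*}}]=\bm{\beta^*}$, now with the fixed step $\eta>0$. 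The projection fact again forces both projected gradients to be zero, and Lemma~\ref{lemma1} concludes the proof. I expect this limiting step to be the crux: it is the only place where a genuine limit, rather than an exact termination identity, must be justified, and it is precisely where the shrinking mechanism of Step~3 is used. As a consistency check it also explains why this branch cannot persist indefinitely, since once both projected gradients vanish the forecasts must coincide with the current strategies for every sufficiently small $\gamma_k$, contradicting the very inequality that triggered the shrink.
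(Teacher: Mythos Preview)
Your reduction to Lemma~\ref{lemma1} via the normal-cone characterization of projection fixed points is exactly the paper's approach, and your treatment of the terminate branch is correct. The gap is in the opening move of your case split: you write ``suppose GA-SPP converges, so its iterates reach an exact fixed point,'' but convergence here must include the purely asymptotic case $(\bm{\alpha_k},\bm{\beta_k})\to(\bm{\alpha^*},\bm{\beta^*})$ with no exact equality at any finite $k$. That is precisely the mode of convergence established in Theorems~\ref{thesemi} and~\ref{thepart}, which then invoke this proposition. In that scenario neither equality test in Step~3 ever fires, $\gamma_k$ stays fixed at $\gamma_0$, and your two branches simply do not cover it. The paper's sketch handles this as its second case, asserting that the forecast sequence $(\bm{\overline{\alpha}_{k+1}},\bm{\overline{\beta}_{k+1}})$ also converges to $(\bm{\alpha^*},\bm{\beta^*})$; once that is in hand, passing to the limit in Step~2 by continuity gives $\bm{\alpha^*}=\Pi_{\bm{\Delta_1}}[\bm{\alpha^*}+\eta\,\bm{\partial_{\alpha^*}}]$ and $\bm{\beta^*}=\Pi_{\bm{\Delta_2}}[\bm{\beta^*}+\eta\,\bm{\partial_{\beta^*}}]$, and your projection fact finishes as before.

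There is also a smaller wrinkle in your shrinking case: you tacitly assume the strategies remain frozen at $(\bm{\alpha^*},\bm{\beta^*})$ through every subsequent shrink, but a smaller $\gamma_{k+1}$ changes the forecast and hence the output of Step~2, so $(\bm{\alpha_{k+2}},\bm{\beta_{k+2}})$ need not equal $(\bm{\alpha^*},\bm{\beta^*})$. Your closing consistency remark hints at this but does not resolve it. The clean fix is the paper's: treat the non-terminating situation uniformly as a limit $(\bm{\alpha^*},\bm{\beta^*})=\lim_k(\bm{\overline{\alpha}_{k+1}},\bm{\overline{\beta}_{k+1}})=\lim_k(\bm{\alpha_k},\bm{\beta_k})$, which subsumes both the asymptotic-convergence and the repeated-shrink scenarios in one stroke.
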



Here is a proof sketch (the detailed formal proof is described in supplementary material\footnote{\label{supplementary_material}\url{https://drive.google.com/file/d/1TZeRf0xp4g4wg-JX7zA9TjqC2S619pAp/view?usp=sharing}}). According to Step 3 in the algorithm \ref{alg:gaspp}, if the strategy pair trajectory converges at \footnotesize$(\bm{\alpha^*},\ \bm{\beta^*})$\normalsize, then \footnotesize{}$(\bm{\alpha^*},\ \bm{\beta^*})=(\bm{\overline{\alpha}_{k+1}},\ \bm{\overline{\beta}_{k+1}})=(\bm{\alpha_k},\ \bm{\beta}_{k})$ \normalsize or \footnotesize$(\bm{\alpha^*},\ \bm{\beta^*})=\lim_{k\rightarrow\infty}(\bm{\overline{\alpha}_{k+1}},\ \bm{\overline{\beta}_{k+1}})=\lim_{k\rightarrow\infty}(\bm{\alpha_k},\ \bm{\beta}_{k})$\normalsize. For both cases, we can have \footnotesize$\bm{\alpha^*} = \Pi_{\Delta_1}[\bm{\alpha^*}+\eta \bm{\partial_{{\alpha^*}}}]$ \normalsize and \footnotesize$\bm{\beta^*} = \Pi_{\Delta_2}[\bm{\beta^*}+\eta \bm{\partial_{{\beta^*}}}]$\normalsize. 
From here, we can show that, for any arbitrary small \footnotesize$\epsilon>0$,\ $\bm{\alpha^*} = \Pi_{\Delta_1}[\bm{\alpha^*}+\epsilon \bm{\partial_{{\alpha^*}}}]$ \normalsize and \footnotesize$\bm{\beta^*} = \Pi_{\Delta_2}[\bm{\beta^*}+\epsilon \bm{\partial_{{\beta^*}}}]$\normalsize, which imply \footnotesize$P_{\bm{\Delta_1}}(\bm{\alpha^*},\  \bm{\partial_{\alpha^*}}) = \bm{0}$ \normalsize and \footnotesize$P_{\bm{\Delta_2}}(\bm{\beta^*},\ \bm{\partial_{\beta^*}}) =\bm{0}$\normalsize. Then according to Lemma~\ref{lemma1}, \footnotesize$(\bm{\alpha^*},\ \bm{\beta^*})$ \normalsize is a Nash equilibrium.

\section{Convergence of GA-SPP}\label{sec:convergence_of_ga-spp}
We will show the Nash convergence of GA-SPP in three classes of games in this section.

\subsection{$m\times n$ Positive Semi-Definite Games}

A function $\Phi(\bm{v},\ \bm{w})$ is called a positive semi-definite function if it obeys the inequality defined in~\cite{antipin1995convergence}:
\begin{equation}
    \ \Phi(\bm{w},\ \bm{w})-\Phi(\bm{w},\ \bm{v})-\Phi(\bm{v},\ \bm{w})+\Phi(\bm{v},\ \bm{v})\geq0.
\end{equation}

To facilitate the proof, we define the normalized value function for a game: 
\begin{equation}
    \label{Phi}
    \Phi(\bm{v},\ \bm{w})=V_r(\bm{\alpha^1},\ \bm{\beta^2})+V_c(\bm{\alpha^2},\ \bm{\beta^1}),
\end{equation}
where $\bm{v}=(\bm{\alpha^1},\ \bm{\beta^1})\in \{\bm{\Delta_1}\times\bm{\Delta_2}\}$, $\bm{w}=(\bm{\alpha^2},\ \bm{\beta^2})\in \{\bm{\Delta_1}\times\bm{\Delta_2}\}$.

\begin{Definition}\label{defsemi0}
	A 2-agent $m\times n$ game is called positive semi-definite (PSD) game if its normalized value function obeys
    \begin{equation}
    \ \Phi(\bm{w},\ \bm{w})-\Phi(\bm{w},\ \bm{v})-\Phi(\bm{v},\ \bm{w})+\Phi(\bm{v},\ \bm{v})\geq0.
\end{equation}
\end{Definition}

It means that for a PSD game, its payoff matrices satisfies
\begin{equation}
    \label{defsemi}
    \begin{aligned}
	&V_r(\bm{\alpha^1},\ \bm{\beta^1})+V_c(\bm{\alpha^1},\ \bm{\beta^1})+V_r(\bm{\alpha^2},\ 
	\bm{\beta^2})+V_c(\bm{\alpha^2},\ \bm{\beta^2}) \\
	\geq&V_r(\bm{\alpha^1},\ \bm{\beta^2})+V_c(\bm{\alpha^1},\ \bm{\beta^2})+V_r(\bm{\alpha^2},\  \bm{\beta^1})+V_c(\bm{\alpha^2},\ \bm{\beta^1})\\
	& \ \ \ \forall \bm{\alpha^1},\ \bm{\alpha^2}\in\bm{\Delta_{1}},\ \ \ \forall \bm{\beta^1},\ \bm{\beta^2}\in\bm{\Delta_{2}}.
	\end{aligned}
\end{equation}

Zero-sum games are a subset of PSD games, because their value functions satisfy $V_r(\bm{\alpha},\ \bm{\beta})+V_c(\bm{\alpha},\ \bm{\beta})=0$, then both sides of inequality~\ref{defsemi} are equal to zero.

For a PSD game,  if $(\bm{\alpha^*},\ \bm{\beta^*})$ is a Nash equilibrium and  $\bm{v^*}=(\bm{\alpha^*},\ \bm{\beta^*})$, then its normalized function obeys
    \begin{eqnarray}
    \label{semifunc}
    \langle\nabla_2\Phi(\bm{w},\ \bm{w}),\ \bm{w}-\bm{v^*}\rangle\geq 0\ \ \ \forall\bm{w} \in \{\bm{\Delta_1}\times\bm{\Delta_2}\}.
    \end{eqnarray}

    In the proof of Theorem~\ref{thesemi}, we will use this inequality.

\begin{theo}
	\label{thesemi}
	If, in a 2-agent, $m \times n$ iterated positive semi-definite norm-form game, both agents follow the GA-SPP algorithm (with  Condition 1, 2, and 3), then their strategies will converge to a Nash equilibrium. 
\end{theo}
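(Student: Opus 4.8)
\noindent\emph{Proof idea.}
The plan is to recognize GA-SPP, specialized to a PSD game, as an extragradient (Korpelevich--Antipin) scheme for the pseudo-gradient field of the game, and to prove that the iterates converge by a Fej\'er-monotonicity (Lyapunov) argument; Proposition~\ref{prop1} then upgrades ``the iterates converge'' to ``they converge to a Nash equilibrium,'' which is exactly what the theorem asserts.

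First I would rewrite the update compactly. By~\eqref{partialvrvc}, $\bm{\partial_\alpha}V_r$ depends only on $\bm\beta$ and $\bm{\partial_\beta}V_c$ only on $\bm\alpha$. Writing $\bm z=(\bm\alpha;\bm\beta)$, $\Pi_{\bm\Delta}$ for the (coordinatewise) projection onto $\bm\Delta=\bm\Delta_1\times\bm\Delta_2$, and
\[ g(\bm z)=\big(\bm{\partial_\alpha}V_r(\bm\alpha,\bm\beta);\ \bm{\partial_\beta}V_c(\bm\alpha,\bm\beta)\big), \]
Steps~1--2 of Algorithm~\ref{alg:gaspp} collapse to
\[ \bm{\overline z}_{k+1}=\Pi_{\bm\Delta}[\bm z_k+\gamma_k\, g(\bm z_k)],\qquad \bm z_{k+1}=\Pi_{\bm\Delta}[\bm z_k+\eta\, g(\bm{\overline z}_{k+1})], \]
because the correction step evaluates each player's gradient at the opponent's forecast, which (by the decoupled dependence just noted) is exactly $g(\bm{\overline z}_{k+1})$. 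This is extragradient on $g$ with prediction length $\gamma_k$ and correction length $\eta$, and it is essential here that $\bm{\overline z}_{k+1}$ was projected into $\bm\Delta$.

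Next I would extract the two analytic facts that drive convergence. Since $V_r,V_c$ are bi-affine in $(\bm\alpha,\bm\beta)$, $g(\bm z_1)-g(\bm z_2)$ is linear, and the PSD identity~\eqref{defsemi} is equivalent to the cancellation of the bilinear coupling of $R$ and $C$; hence
\[ \langle g(\bm z_1)-g(\bm z_2),\ \bm z_1-\bm z_2\rangle=0\qquad\text{for all }\bm z_1,\bm z_2, \]
i.e.\ $-g$ is monotone (this is the analytic content of~\eqref{defsemi} and, combined with Nash optimality, of~\eqref{semifunc}). Second, fixing any Nash equilibrium $\bm v^*=(\bm\alpha^*;\bm\beta^*)$ (one exists), its first-order optimality (cf.\ Lemma~\ref{lemma1}) gives the variational inequality $\langle g(\bm v^*),\ \bm z-\bm v^*\rangle\le 0$ for all $\bm z\in\bm\Delta$. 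Together these yield $\langle g(\bm{\overline z}_{k+1}),\ \bm{\overline z}_{k+1}-\bm v^*\rangle\le 0$.

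The core step is the descent estimate. Starting from the obtuse-angle characterization of $\Pi_{\bm\Delta}$ applied to the correction step with test point $\bm v^*$, substituting the inequality just derived, and then using the prediction step's projection inequality (with test point $\bm z_{k+1}$) together with the Lipschitz bound $\|g(\bm z_1)-g(\bm z_2)\|\le(\delta_r+\delta_c)\|\bm z_1-\bm z_2\|$, I would assemble
\[ \|\bm z_{k+1}-\bm v^*\|^2\ \le\ \|\bm z_k-\bm v^*\|^2-c\,\|\bm{\overline z}_{k+1}-\bm z_k\|^2 \]
for some $c>0$. The main obstacle is that this departs from textbook extragradient in two ways: the prediction and correction step sizes differ ($\gamma_k\ne\eta$), so the cross terms carry a factor $\eta/\gamma_k$ that does not cancel automatically, and $\gamma_k$ may be shrunk by Step~3. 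Conditions~2 and~3 ($\gamma_0,\eta<1/(\delta_r+\delta_c)$ and $4\gamma_0^2\delta_r\delta_c<1$) are precisely what forces the residual quadratic form to stay nonpositive with $c>0$, and making this survive the shrinking of $\gamma_k$ is the delicate part. Granting the estimate, $\{\bm z_k\}$ is Fej\'er-monotone with respect to every Nash equilibrium, so it is bounded, $\|\bm z_k-\bm v^*\|$ is nonincreasing, and summing gives $\|\bm{\overline z}_{k+1}-\bm z_k\|\to 0$; a standard Fej\'er/Opial argument (every cluster point is a fixed point of the extragradient map, hence has zero projected gradient and is Nash by Lemma~\ref{lemma1}) then yields convergence of the whole sequence. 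Finally Proposition~\ref{prop1} confirms the limit is a Nash equilibrium.
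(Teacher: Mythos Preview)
Your proposal is correct and follows essentially the same route as the paper: both recognize GA-SPP as an extragradient (Korpelevich--Antipin) scheme, use the PSD condition as monotonicity of the pseudo-gradient together with the Nash variational inequality and Lipschitz continuity to derive a Fej\'er-type descent estimate (the paper's inequality~\eqref{ve10}), and then conclude convergence and invoke Proposition~\ref{prop1}. The paper resolves the $\gamma_k\neq\eta$ issue you flag by bounding all step sizes by $h=\max\{\gamma_0,\eta\}$ and using the sharper Lipschitz constant $L\le(\delta_r+\delta_c)/\sqrt{2}$, so that Condition~3 yields $1-2h^2L^2>0$.
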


\begin{proof}
	Motivated by~\cite{Antipin2003Extragradient}, our proof will use some variational inequalities techniques.
	 
	From the first and second step of GA-SPP (Algorithm~\ref{alg:gaspp}), we have estimates
\begin{equation}\label{est}
	\begin{aligned}
		&|\bm{\overline{\alpha}_{k+1}}-\bm{\alpha_{k+1}}|\leq |\gamma_k \bm{\partial_{\alpha}}V_r(\bm{\alpha_k},\ \bm{\beta_{k}})-\eta \bm{\partial_{\alpha}}V_r(\bm{\alpha_k},\ \bm{\overline{\beta}_{k+1}})|,\\
		&|\bm{\overline{\beta}_{k+1}}-\bm{\beta_{k+1}}|\leq |\gamma_k \bm{\partial_{\beta}}V_c(\bm{\alpha_{k}},\ \bm{\beta_{k}})-\eta \bm{\partial_{\beta}}V_c(\bm{\overline{\alpha}_{k+1}},\ \bm{\beta_{k}})|.
	\end{aligned}
\end{equation}

We present the first and second step of GA-SPP in the form of variational inequalities:
\begin{equation}\label{ve1}
    \begin{aligned}
	&\langle\bm{\overline{\alpha}_{k+1}}-\bm{\alpha_{k}}-\gamma_k \bm{\partial_{\alpha}}V_r(\bm{\alpha_k},\ 
	 \bm{\beta_k}),\ \bm{z_1}-\bm{\overline{\alpha}_{k+1}}\rangle\geq0\ \ \ \forall\bm{z_1}\in\bm{\Delta_1},\\
	&\langle\bm{\overline{\beta}_{k+1}}-\bm{\beta_{k}}-\gamma_k \bm{\partial_{\beta}}V_c(\bm{\alpha_k},\ \bm{\beta_k}),\ \bm{z_2}-\bm{\overline{\beta}_{k+1}}\rangle\geq0\ \ \ \forall\bm{z_2}\in\bm{\Delta_2};
    \end{aligned}
\end{equation}
\begin{equation}\label{ve2}
    \begin{aligned}
	&\langle\bm{\alpha_{k+1}}-\bm{\alpha_k}-\eta \bm{\partial_{\alpha}}V_r(\bm{\alpha_k},\ \bm{\overline{\beta}_{k+1}}),\ 
	\bm{z_1}-\bm{\alpha_{k+1}}\rangle \geq0\ \ \ \forall\bm{z_1}\in\bm{\Delta_1},\\
	&\langle\bm{\beta_{k+1}}-\bm{\beta_{k}}-\eta \bm{\partial_{\beta}}V_c(\bm{\overline{\alpha}_{k+1}},\ \bm{\beta_k}),\ \bm{z_2}-\bm{\beta_{k+1}}\rangle\geq0\ \ \ \forall\bm{z_2}\in\bm{\Delta_2}.
    \end{aligned}
\end{equation}
    
Let $\bm{v}$ = ${\left(\begin{array}{c}\bm{\alpha^1}\\\bm{\beta^1}\end{array}\right)}$. 
Put $\bm{z_1}=\bm{\alpha^*},\bm{z_2}=\bm{\beta^*}$ in Eq.~\ref{ve2}, then set $\bm{z_1}=\bm{\alpha_{k+1}},\bm{z_2}=\bm{\beta_{k+1}}$ in Eq.~\ref{ve1}, and take into account of Eq.~\ref{est}, we can get (the detailed computation is listed in our supplementary material)
\begin{equation}
   \begin{aligned}
   \label{ve8}
     &\langle\bm{v_{k+1}}-\bm{v_k},\ \bm{v^*}-\bm{v_{k+1}}\rangle+   \langle\bm{\overline{v}_{k+1}}-\bm{v_k},\ \bm{v_{k+1}}-\bm{\overline{v}_{k+1}}\rangle\\
   + & \eta\langle\nabla_2\Phi(\bm{\overline{v}_{k+1}},\ \bm{\overline{v}_{k+1}}),\ \bm{v^*}-\bm{\overline{v}_{k+1}}\rangle\\
   + & h^2\|\nabla_2\Phi(\bm{v_{k}},\ \bm{v_k})-\nabla_2\Phi(\bm{\overline{v}_{k+1}},\ \bm{\overline{v}_{k+1}})\|^2\ge0,
   \end{aligned}
\end{equation}
where $h=max\{\gamma_0, \eta\}$. By means of identity, the first two scalar products in Eq.~\ref{ve8} can be rewritten as
\begin{equation}\label{ve9_1}
   \begin{aligned} 
   \frac{1}{2}\|\bm{v_{k}}-\bm{v^*}\|^2-\frac{1}{2}\|\bm{v_{k+1}}-\bm{v^*}\|^2-\\
   \frac{1}{2}\|\bm{v_{k+1}}-\bm{\overline{v}_{k+1}}\|^2-\frac{1}{2}\|\bm{\overline{v}_{k+1}}-\bm{v_k}\|^2.
   \end{aligned}
\end{equation}
\indent Set $\bm{w}=\bm{\overline{v}_{k+1}}$ in Eq.~\ref{semifunc}, then the third term in Eq.\ref{ve8} is non-positive. For the last term of Eq.~\ref{ve8}, if $\nabla_2\Phi(\bm{v_{k}},\ \bm{v_k})$  satisfies the Lipschitz condition with constant $L$, then following estimate is correct 
\begin{equation}
   \label{ve9_3}
   |\nabla_2\Phi(\bm{v_{k}},\ \bm{v_k})-\nabla_2\Phi(\bm{\overline{v}_{k+1}},\ \bm{\overline{v}_{k+1}})|\leq L |\bm{\overline{v}_{k+1}}-\bm{v_k}|.
\end{equation}
   Now put Eq.~\ref{ve9_1} and Eq.~\ref{ve9_3} in Eq.~\ref{ve8}, we can yield
\begin{equation}\label{ve10}
   \begin{aligned}
   &\|\bm{v_{k+1}}-\bm{v^*}\|^2+
   \|\bm{v_{k+1}}-\bm{\overline{v}_{k+1}}\|^2+\\
   &(1-2h^2L^2)\|\bm{\overline{v}_{k+1}}-\bm{v_k}\|^2\leq\|\bm{v_{k}}-\bm{v^*}\|^2.
   \end{aligned}
\end{equation}
\indent Note that $\nabla_2\Phi(\bm{v_{k}},\ \bm{v_k})=\bm{\partial_{\alpha}}V_r(\bm{\alpha},\ \bm{\beta})+\bm{\partial_{\beta}}V_c(\bm{\alpha},\ \bm{\beta})$. According to Eq.~\ref{partialvrvc}, $\bm{\partial_{\alpha}}V_r(\bm{\alpha},\ \bm{\beta})$ is a function of $\bm{\beta}$, $\bm{\partial_{\beta}}V_c(\bm{\alpha},\ \bm{\beta})$ is a function of $\bm{\alpha}$. The maximum value of 2-norm of $\bm{\partial_{\alpha}}$ is not greater than ${\delta_r}^2/2$, and not greater than ${\delta_c}^2/2$ for 2-norm of $\bm{\partial_{\beta}}$. So the Lipschitz constant $L\leq\frac{\delta_r}{\sqrt{2}}+\frac{\delta_c}{\sqrt{2}}$.
According to Condition 3, $h= max \{\gamma_0,\ \eta \} < \frac{1} {\delta_c+\delta_r}$, so $h L<\frac{\sqrt{2}}{2}$ and $1-2h^2L^2>0$.
Sum up inequality Eq.~\ref{ve10} from $k=0$ to $k=K$, we get
\begin{equation}\label{ve11}
   \begin{aligned}
   &\|\bm{v_{K+1}}-\bm{v^*}\|^2+
   \sum_{k=0}^{K}\|\bm{v_{k+1}}-\bm{\overline{v}_{k+1}}\|^2+\\
   &(1-2h^2L^2)\sum_{k=0}^{K}\|\bm{\overline{v}_{k+1}}-\bm{v_k}\|^2\leq\|\bm{v_{0}}-\bm{v^*}\|^2.
   \end{aligned}
\end{equation}
\indent From the gained inequality (Eq.~\ref{ve11}) the bound of trajectory follows
\begin{equation}
   \label{ve12}
   \|\bm{v_{K+1}}-\bm{v^*}\|^2\leq\|\bm{v_{0}}-\bm{v^*}\|^2,
\end{equation}
and the series are convergent
\begin{equation}\nonumber
\sum_{k=0}^{K}\|\bm{v_{k+1}}-\bm{\overline{v}_{k+1}}\|^2<\infty,\ \sum_{k=0}^{K}\|\bm{\overline{v}_{k+1}}-\bm{v_k}\|^2<\infty.
\end{equation}
\indent As a result, 
$\lim_{k\rightarrow\infty}\|\bm{v_{k+1}}-\bm{\overline{v}_{k+1}}\|^2=0,\ \lim_{k\rightarrow\infty}\|\bm{\overline{v}_{k+1}}-\bm{v_k}\|^2=0$, so $\lim_{k\rightarrow\infty}\|\bm{v_{k+1}}-\bm{v_{k}}\|^2=0$. It implies $\lim_{k\rightarrow\infty}\|\bm{\alpha_{k+1}}-\bm{\alpha_{k}}\|^2=0$ and $\lim_{k\rightarrow\infty}\|\bm{\beta_{k+1}}-\bm{\beta_{k}}\|^2=0$.
   
So GA-SPP can converge. With Proposition~\ref{prop1}, GA-SPP must converge to a Nash equilibrium. Therefore, proof of Theorem~\ref{thesemi} is completed. 
\end{proof}

\begin{theo}
	\label{thesemi1}
	If, in a 2-agent, $m \times n$ iterated positive semi-definite norm-form game, one agent follows the GA-SPP algorithm (with  Condition 1, 2, and 3), another agent uses GA, then their strategies will converge to a Nash equilibrium. 
\end{theo}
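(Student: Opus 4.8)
The plan is to derive Theorem~\ref{thesemi1} by reusing the machinery of Theorem~\ref{thesemi}, after observing that Gradient Ascent is precisely GA-SPP with a zero-length prediction of the opponent. Concretely, if the column agent runs GA then its update $\bm{\beta_{k+1}}=\Pi_{\bm{\Delta_2}}[\bm{\beta_k}+\eta\bm{\partial_\beta}V_c(\bm{\alpha_k},\ \bm{\beta_k})]$ is exactly the second GA-SPP step $\Pi_{\bm{\Delta_2}}[\bm{\beta_k}+\eta\bm{\partial_\beta}V_c(\bm{\overline{\alpha}_{k+1}},\ \bm{\beta_k})]$ under the substitution $\bm{\overline{\alpha}_{k+1}}=\bm{\alpha_k}$, i.e. the row strategy is ``forecast'' with prediction length $0$. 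Taking (without loss of generality) the row as the GA-SPP agent, the joint process is therefore still a prediction scheme, but with the asymmetric forecast point $\bm{\overline{v}_{k+1}}=(\bm{\alpha_k};\ \bm{\overline{\beta}_{k+1}})$: only the column's strategy is actually predicted, while the row block of the forecast remains $\bm{\alpha_k}$.

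With this identification I would replay the variational-inequality argument of Theorem~\ref{thesemi} with $\bm{\overline{\alpha}_{k+1}}$ replaced by $\bm{\alpha_k}$ throughout. I keep both row inequalities (the analogues of the first lines of Eq.~\ref{ve1} and Eq.~\ref{ve2}) and, for the column, only the single GA inequality $\langle\bm{\beta_{k+1}}-\bm{\beta_k}-\eta\bm{\partial_\beta}V_c(\bm{\alpha_k},\ \bm{\beta_k}),\ \bm{z_2}-\bm{\beta_{k+1}}\rangle\ge0$. Putting $\bm{z}=\bm{v^*}$ in the update inequality, $\bm{z}=\bm{v_{k+1}}$ in the forecast inequality, and applying the polarization identity exactly as in Eq.~\ref{ve9_1}, I obtain the analogue of Eq.~\ref{ve8}, whose monotonicity term is $\eta\langle\nabla_2\Phi(\bm{\overline{v}_{k+1}},\ \bm{\overline{v}_{k+1}}),\ \bm{v^*}-\bm{\overline{v}_{k+1}}\rangle$. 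Since $\bm{\overline{v}_{k+1}}=(\bm{\alpha_k};\ \bm{\overline{\beta}_{k+1}})$ still lies in $\bm{\Delta_1}\times\bm{\Delta_2}$, Eq.~\ref{semifunc} applies verbatim with $\bm{w}=\bm{\overline{v}_{k+1}}$ and this term is again non-positive.

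The asymmetry simplifies the residual term rather than complicating it, at least formally. Since $\bm{\partial_\beta}V_c$ depends on $\bm{\alpha}$ alone (Eq.~\ref{partialvrvc}) and the column block of $\bm{\overline{v}_{k+1}}$ equals $\bm{\alpha_k}$, the column block of $\nabla_2\Phi(\bm{v_k},\ \bm{v_k})-\nabla_2\Phi(\bm{\overline{v}_{k+1}},\ \bm{\overline{v}_{k+1}})$ vanishes and only the row block $\bm{\partial_\alpha}V_r(\bm{\alpha_k},\ \bm{\beta_k})-\bm{\partial_\alpha}V_r(\bm{\alpha_k},\ \bm{\overline{\beta}_{k+1}})$ survives; moreover $\|\bm{\overline{v}_{k+1}}-\bm{v_k}\|=\|\bm{\overline{\beta}_{k+1}}-\bm{\beta_k}\|$ because the row block of the forecast does not move. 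Hence the residual is governed only by the row's Lipschitz constant $L\le\delta_r/\sqrt2$, and the analogue of Eq.~\ref{ve10} reads $\|\bm{v_{k+1}}-\bm{v^*}\|^2+\|\bm{v_{k+1}}-\bm{\overline{v}_{k+1}}\|^2+(1-2h^2L^2)\|\bm{\overline{v}_{k+1}}-\bm{v_k}\|^2\le\|\bm{v_k}-\bm{v^*}\|^2$, where $h<\tfrac{1}{\delta_r+\delta_c}\le\tfrac{1}{\delta_r}$ by Condition~3 keeps $1-2h^2L^2>0$. Summing as in Eq.~\ref{ve11}--\ref{ve12} forces $\|\bm{v_{k+1}}-\bm{\overline{v}_{k+1}}\|\to0$ and $\|\bm{\overline{v}_{k+1}}-\bm{v_k}\|\to0$, so $\|\bm{v_{k+1}}-\bm{v_k}\|\to0$ and the trajectory converges.

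Nash convergence then follows from Proposition~\ref{prop1} with no change: at the limit $\bm{\overline{v}_{k+1}}\to\bm{v^*}$ and $\bm{v_k}\to\bm{v^*}$, so the row's GA-SPP step and the column's GA step both collapse to the fixed-point relations $\bm{\alpha^*}=\Pi_{\bm{\Delta_1}}[\bm{\alpha^*}+\eta\bm{\partial_{\alpha^*}}]$ and $\bm{\beta^*}=\Pi_{\bm{\Delta_2}}[\bm{\beta^*}+\eta\bm{\partial_{\beta^*}}]$, which by the $\epsilon\to0$ argument together with Lemma~\ref{lemma1} make $(\bm{\alpha^*},\ \bm{\beta^*})$ a Nash equilibrium. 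I expect the main obstacle to lie in the residual estimate feeding the analogue of Eq.~\ref{ve8}: with only one agent predicting, the row's genuine update step $\eta\bm{\partial_\alpha}V_r(\bm{\alpha_k},\ \bm{\overline{\beta}_{k+1}})$ has no matching forecast step, so the clean cancellation (via Eq.~\ref{est}) that turns the cross term into a pure gradient difference no longer holds block-by-block. Showing that the surviving residual is genuinely dominated by the positive $(1-2h^2L^2)$ term, rather than leaving an uncompensated contribution of order $\eta$, is the delicate step, and I expect it to be what pins down the admissible relation between $\eta$ and $\gamma_0$.
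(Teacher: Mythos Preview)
Your proposal is correct and matches the paper's approach: the paper itself omits the proof entirely, stating only that it is ``similar to that of Theorem~\ref{thesemi},'' and your reduction---treating the GA agent as a GA-SPP agent with prediction length $0$ so that $\bm{\overline{\alpha}_{k+1}}=\bm{\alpha_k}$, then replaying the variational-inequality estimates, the PSD monotonicity step (Eq.~\ref{semifunc}), and the telescoping sum---is precisely that adaptation. Your closing worry about an uncompensated residual is unfounded for the reason you already identified: since $\bm{\partial_\beta}V_c$ depends only on $\bm{\alpha}$ and the $\bm{\alpha}$-block of $\bm{\overline{v}_{k+1}}$ is unchanged, the column residual vanishes and the surviving row residual is controlled exactly as in Eq.~\ref{ve9_3}--\ref{ve10}, so no extra relation between $\eta$ and $\gamma_0$ beyond Conditions~1--3 is needed.
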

    The proof of this theorem is omitted, which is similar to that of Theorem \ref{thesemi}.

\subsection{A Subclass of $2\times n$ General-Sum Games}
In this section, we will show that GA-SPP converges to a Nash equilibrium in a subclass of 2-agent $2\times n$ general games (Theorem~\ref{thepart}).

A 2-agent, $2\times n$, general-sum normal-form game's payoff matrices can be written as
\begin{gather*}
R=\begin{bmatrix} r_{11} & ... & r_{1n} \\ r_{21} & ... & r_{2n} \end{bmatrix},\quad
C=\begin{bmatrix} c_{11} & ... & c_{1n} \\ c_{21} & ... & c_{2n} \end{bmatrix}.
\end{gather*}

Let
\begin{center}
	$\bm{r_1}={[r_{11} \ ...\ r_{1,n-1}]}^\mathrm{T}$,\ \ \ \ 
	$\bm{r_2}={[r_{21} \ ...\ r_{2,n-1}]}^\mathrm{T}$,\\
	$\bm{c_1}={[c_{11} \ ...\ c_{1,n-1}]}^\mathrm{T}$,\ \ \ \
	$\bm{c_2}={[c_{21} \ ...\ c_{2,n-1}]}^\mathrm{T}$.
\end{center}

Then agents' expected payoffs (Eq.~\ref{equa:VrVc}) are
\begin{equation}\label{pVrVc}
	\begin{aligned}
	V_r(\alpha,\ \bm{\beta})&=(\alpha \bm{\beta^\mathrm{T}})\bm{r_1}
	+r_{1n}(\alpha(1-\bm{\beta^\mathrm{T}}\bm{e_{n-1}}))  \\
	&+(1-\alpha) \bm{\beta^\mathrm{T}}\bm{r_2} +r_{2n}((1-\alpha)(1-\bm{\beta^\mathrm{T}}\bm{e_{n-1}})), \\
	V_c(\alpha,\ \bm{\beta})&=(\alpha \bm{\beta^T})\bm{c_1}
	+c_{1n}(\alpha(1-\bm{\beta^\mathrm{T}}\bm{e_{n-1}})) \\
	&+ (1-\alpha) \bm{\beta^\mathrm{T}}\bm{c_2} +c_{2n}((1-\alpha)(1-\bm{\beta^\mathrm{T}}\bm{e_{n-1}})).\\
	\end{aligned}
\end{equation}

The gradients (Eq.~\ref{partialvrvc}) can be written as
\begin{equation}
	\begin{aligned}
	&\partial_\alpha V_r(\alpha,\bm{\beta})=\frac{\partial V_r(\alpha ,\bm{\beta})}{\partial \alpha}=\bm{\beta^\mathrm{T}}\bm{u_r}+b_r, \\
	&\bm{\partial_{\beta}} V_c(\alpha,\bm{\beta})=\frac{\partial V_c(\alpha ,\bm{\beta})}{\partial \bm{\beta}}=\alpha\bm{u_c} +\bm{b_c},
	\end{aligned} 
\end{equation}
\noindent where $b_r=r_{1n}-r_{2n}$, $\bm{b_c}=\bm{c_2}-c_{2n}{\bm{e}}_{n-1}$, $\bm{u_r} =\bm{r_1}-\bm{r_2}-b_r\bm{e}_{n-1}$, and $\bm{u_c}=\bm{c_1}-\bm{c_2}-(c_{1n}-c_{2n})\bm{e}_{n-1}$.

\begin{theo}
	\label{thepart}
	If, in a 2-agent, $2\times n$, norm-form game, if there exists a $\delta>0$ such that the payoff matrices obey
	\begin{equation}\label{defpart}
		\bm{u_r}+\delta\bm{u_c}=0,
	\end{equation}
	and both agents follow the GA-SPP algorithm (with Condition 1, 2, and 3), then their strategies will converge to a Nash equilibrium. 
\end{theo}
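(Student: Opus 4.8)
The plan is to reduce this subclass to the positive semi-definite case already settled by Theorem~\ref{thesemi}. First I would evaluate the fourfold mixed difference that defines the PSD criterion of Definition~\ref{defsemi0}. Using the payoffs in Eq.~\ref{pVrVc}, the only parts of $V_r$ and $V_c$ that survive the second difference $\Phi(\bm{w},\bm{w})-\Phi(\bm{w},\bm{v})-\Phi(\bm{v},\bm{w})+\Phi(\bm{v},\bm{v})$ are the bilinear cross terms $\alpha\,\bm{\beta}^\mathrm{T}\bm{u_r}$ and $\alpha\,\bm{\beta}^\mathrm{T}\bm{u_c}$, so the criterion collapses to
\begin{equation*}
(\alpha^2-\alpha^1)(\bm{\beta^2}-\bm{\beta^1})^\mathrm{T}(\bm{u_r}+\bm{u_c}).
\end{equation*}
This is not sign-definite for a generic $2\times n$ game, which is exactly why Theorem~\ref{thesemi} does not apply as stated. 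However, if the column agent's payoff matrix is rescaled by the positive constant $\delta$ of Eq.~\ref{defpart}, the cross term becomes proportional to $\bm{u_r}+\delta\bm{u_c}$, which vanishes identically by hypothesis. Hence the rescaled game $(\bm{R},\delta\bm{C})$ is positive semi-definite, with the difference in fact equal to zero as for a zero-sum game; and since multiplying one agent's payoff by a positive constant leaves every best response, and therefore the whole Nash set, unchanged, it suffices to show that GA-SPP drives the strategies to a Nash equilibrium of the rescaled game.

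The subtlety is that GA-SPP uses a common step size $\eta$ and prediction length $\gamma_k$ for both agents, so running it on $(\bm{R},\bm{C})$ is not literally running it on $(\bm{R},\delta\bm{C})$: in the rescaled picture the column update carries an extra factor $\delta$, i.e.\ an effective step size $\eta/\delta$ and prediction $\gamma_k/\delta$. I would absorb this mismatch with the weighted Lyapunov function
\begin{equation*}
\Psi_k=(\alpha_k-\alpha^*)^2+\delta\,\|\bm{\beta_k}-\bm{\beta^*}\|^2,
\end{equation*}
equivalently the inner product weighted by $W=\mathrm{diag}(1,\delta\bm{I})$. Because $W$ acts as a positive scalar on each agent's block, the Euclidean simplex projection $\Pi_{\bm{\Delta}}$ used by the algorithm coincides with the $W$-weighted projection, so the two variational inequalities Eq.~\ref{ve1}--\ref{ve2} and the entire chain of identities leading to Eq.~\ref{ve8} can be replayed with $\langle\cdot,\cdot\rangle$ replaced by $\langle\cdot,\cdot\rangle_W$. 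The choice $W=\mathrm{diag}(1,\delta\bm{I})$ is precisely the one that cancels the coupling: the cross contribution is proportional to $(\alpha-\alpha')(\bm{\beta}-\bm{\beta'})^\mathrm{T}(\bm{u_r}+\delta\bm{u_c})=0$, so the rescaled game is positive semi-definite in the $W$-geometry and the analogue of inequality~\ref{semifunc} holds at the equilibrium, forcing the corresponding term of Eq.~\ref{ve8} to be non-positive.

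With these ingredients the proof mirrors Theorem~\ref{thesemi}: a weighted version of Eq.~\ref{ve10} gives $\Psi_{k+1}\le\Psi_k$ together with summability of $\|\bm{\overline{v}_{k+1}}-\bm{v_k}\|_W^2$ and $\|\bm{v_{k+1}}-\bm{\overline{v}_{k+1}}\|_W^2$, whence $\|\bm{v_{k+1}}-\bm{v_k}\|_W\to0$; Fej\'er monotonicity of $\Psi_k$ then yields convergence of the iterates, and Proposition~\ref{prop1} identifies the limit as a Nash equilibrium. I expect the main obstacle to be re-deriving the contraction coefficient in the $W$-norm: one must verify that the weighted analogue of $1-2h^2L^2>0$ stays positive. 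Here the $\delta$-weighting rescales the bounds on $\bm{u_r}$ and $\bm{u_c}$ against one another, and since $\bm{u_r}=-\delta\bm{u_c}$ ties the range $\delta_r$ to $\delta\,\delta_c$, the coupling term in the Lipschitz estimate assembles into a quantity of the form $\gamma_0^2\delta_r\delta_c$; it is therefore Condition~2 ($4\gamma_0^2\delta_r\delta_c<1$), rather than Condition~3 alone, that I expect to be the decisive hypothesis keeping the coefficient bounded away from zero. Confirming that Condition~2 is exactly the right threshold under the weighting is the crux of the argument, and the positivity $\delta>0$ is essential both for $W$ to define a genuine norm and for the Nash set to be preserved.
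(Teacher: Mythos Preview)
Your reduction to the positive semi-definite case via a $\delta$-weighting is exactly the idea the paper uses, but the paper packages it slightly differently. Instead of keeping the original coordinates and carrying a weighted inner product $\langle\cdot,\cdot\rangle_W$ through the whole chain Eq.~\ref{ve1}--\ref{ve10}, the paper performs the explicit substitution $x=\alpha/\sqrt{\delta}$, $\bm{y}=\sqrt{\delta}\,\bm{\beta}$, checks that the GA-SPP iterates in the $(x,\bm{y})$ variables have precisely the form of GA-SPP applied to a $2\times n$ PSD game (one with $\bm{u_r}'+\bm{u_c}'=0$) over the rescaled simplices, and then observes that the proof of Theorem~\ref{thesemi} never used the specific shape of the simplex, only that the feasible set is bounded and convex. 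Hence Theorem~\ref{thesemi} applies verbatim to the transformed iteration, and Proposition~\ref{prop1} finishes.

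This is the same argument as yours, seen through a change of variables rather than a change of metric; the block-diagonal weight you introduce is exactly the metric induced by the paper's substitution, and your observation that the per-block Euclidean projections coincide with the $W$-projections is the counterpart of the paper's remark that the transformed domain is still convex. What the paper's packaging buys is that no inequality needs to be re-derived: the contraction coefficient issue you flag as ``the crux'' is simply inherited from Theorem~\ref{thesemi} once the transformed iteration is recognized as a PSD instance, rather than being recomputed from scratch in a weighted norm. Your route is equally valid and makes the role of $\delta$ more transparent, at the cost of reproving estimates that the paper reuses wholesale.
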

\begin{proof}
    For a 2-agent $2\times n$ game, if we put Eq.~\ref{pVrVc} into Definition~\ref{defsemi0}, then we derive $\bm{u_r}+\bm{u_c}=0$. It shows that $2\times n$ games in Theorem~\ref{thepart} with $\delta=1$ are PSD games.
    
    First we consider $2\times n$ positive semi-definite games ($\bm{u_{r}}+\bm{u_{c}}=0$).
    According to Theorem \ref{thesemi}, in this particular case, GA-SPP can converge to a Nash Equilibrium. It means the following iteration can converge:  
\begin{equation}\label{partalg}
	\begin{aligned}
	&\overline{\alpha}_{k+1}=\Pi_{\Delta_1}[\alpha_k+ \gamma_k ( \bm{\beta_k}^\mathrm{T}\bm{u_{r_1}}+b_{r_1})],\\
	&\bm{\overline{\beta}_{k+1}}=\Pi_{\bm{\Delta_2}}[\bm{\beta_k}-  \gamma_k (\alpha_k \bm{u_{r_1}}+  \bm{b_{c_1}})];\\
	\ \ \  	&\alpha_{k+1}=\Pi_{\Delta_1}[\alpha_k+ \eta ( \bm{\overline{\beta}_k}^\mathrm{T}\bm{u_{r_1}}+b_{r_1})], \\
	&\bm{\beta_{k+1}}=\Pi_{\bm{\Delta_2}}[\bm{\beta_k}- \eta (\overline{\alpha_k} \bm{u_{r_1}}+  \bm{b_{c_1}})].
	\end{aligned}
\end{equation}
For brevity, we omit step 3 of GA-SPP.
    
For a $2\times n$, norm-form game that obeys Eq.~\ref{defpart}, we have $\bm{u_{r_2}}+\delta\bm{u_{c_2}}=0$. Let $x=\frac{\alpha}{\sqrt{\delta}},\ \bm{y}=\sqrt{\delta}\bm{\beta}$.
    If $\alpha$ and $\bm{\beta}$ follows GA-SPP, then the update rule of $x$ and $\bm{y}$ is
\begin{equation}\label{partalg2}
     \begin{aligned}
     &\overline{x}_{k+1}=\Pi_{\Delta_x}[x_k+ \gamma_k (\bm{y_k}^\mathrm{T}\bm{u_{r_2}} +\frac{b_{r_2}}{\sqrt{\delta}})],\\
     &\bm{\overline{y}_{k+1}}=\Pi_{\bm{\Delta_y}}[\bm{y_k}- \gamma_k (\alpha_k \bm{u_{r_2}}+  \sqrt{\delta}\bm{b_{c_2}})];\\
     &x_{k+1}=\Pi_{\Delta_x}[x_k+ \eta(\bm{\overline{y}_k}^\mathrm{T} \bm{u_{r_2}} +\frac{b_{r_2}}{\sqrt{\delta}})], \\
     &\bm{y_{k+1}}=\Pi_{\bm{\Delta_y}}[\bm{y_k}- \eta (\overline{x_k} \bm{u_{r_2}}+  \sqrt{\delta}\bm{b_{c_2}})].
     \end{aligned}
\end{equation}
\indent Comparing Eq.~\ref{partalg2} with Eq.~\ref{partalg}, $(x, \bm{y})$ can be viewed as a strategy pair of another $2\times n$ PSD game following GA-SPP. Notice that the proof of Theorem \ref{thesemi} only requires that the valid space is a bounded convex set. Therefore, if $(\alpha,\bm{\beta})$ follows GA-SPP, $(x,\bm{y})$ can converge, then $(x,\ \bm{y})$ can still converge in $2\times n$, norm-form game can converge. 
     
With Proposition~\ref{prop1}, we finish the proof of Theorem~\ref{thepart}.
\end{proof}

\begin{theo}
	\label{thepart1}
	If, in a 2-agent, $2\times n$, norm-form game, if there exists $\delta>0$, and the payoff matrices obey
	\begin{equation}\label{defpart}
		\bm{u_r}+\delta\bm{u_c}=0,
	\end{equation}
	and one agent follow the GA-SPP algorithm (with Condition 1, 2, and 3), another agent uses GA, then their strategies will converge to a Nash equilibrium. 
\end{theo}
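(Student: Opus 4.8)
The plan is to mirror the proof of Theorem~\ref{thepart} almost verbatim, replacing every appeal to the symmetric PSD result (Theorem~\ref{thesemi}) with the asymmetric one (Theorem~\ref{thesemi1}). Indeed, Theorem~\ref{thepart1} stands to Theorem~\ref{thesemi1} exactly as Theorem~\ref{thepart} stands to Theorem~\ref{thesemi}: the hypothesis $\bm{u_r}+\delta\bm{u_c}=0$ with $\delta=1$ is precisely the $2\times n$ PSD condition derived in the proof of Theorem~\ref{thepart} (putting Eq.~\ref{pVrVc} into Definition~\ref{defsemi0} gives $\bm{u_r}+\bm{u_c}=0$), and for that case Theorem~\ref{thesemi1} already delivers convergence to a Nash equilibrium when one agent runs GA-SPP and the other runs GA. So the only work is to reduce a general $\delta>0$ to the PSD case, assuming without loss of generality that the row agent runs GA-SPP and the column agent runs GA.

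First I would apply the same change of variables as in Theorem~\ref{thepart}, namely $x=\alpha/\sqrt{\delta}$ and $\bm{y}=\sqrt{\delta}\,\bm{\beta}$. Starting from the mixed update --- the row agent running GA-SPP (which still forecasts the column agent's move $\bm{\overline{\beta}_{k+1}}$ in order to perform its own prediction step) and the column agent running plain GA --- I would rewrite each of the three update lines (the forecast, the GA-SPP update, and the GA update) in the $(x,\bm{y})$ coordinates, exactly as Eq.~\ref{partalg} is transformed into Eq.~\ref{partalg2}. Because $\bm{u_r}+\delta\bm{u_c}=0$, the rescaling absorbs the factor $\delta$ and the transformed coefficients satisfy $\bm{u_r}'+\bm{u_c}'=0$, so $(x,\bm{y})$ is the strategy pair of an auxiliary $2\times n$ PSD game driven by a mixed GA-SPP/GA dynamic over the rescaled domains $\bm{\Delta_x}$ and $\bm{\Delta_y}$.

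Next I would invoke Theorem~\ref{thesemi1} on this auxiliary game. As already noted in the proof of Theorem~\ref{thepart}, the convergence argument only uses that the valid space is a bounded convex set, which $\bm{\Delta_x}$ and $\bm{\Delta_y}$ remain after scaling; hence $(x,\bm{y})$ converges, and therefore so does the original pair $(\alpha,\bm{\beta})$. For the final ``converges $\Rightarrow$ Nash'' step I would use the asymmetric analogue of Proposition~\ref{prop1} that is implicit in the omitted proof of Theorem~\ref{thesemi1}: at the limit the GA agent's fixed-point relation gives $P_{\bm{\Delta_2}}(\bm{\beta^*},\ \bm{\partial_{\beta^*}})=\bm{0}$ directly, while the GA-SPP agent's termination condition forces its forecast to coincide with the current strategy, yielding $P_{\bm{\Delta_1}}(\bm{\alpha^*},\ \bm{\partial_{\alpha^*}})=\bm{0}$; Lemma~\ref{lemma1} then certifies that $(\bm{\alpha^*},\ \bm{\beta^*})$ is a Nash equilibrium.

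The main obstacle I expect is the bookkeeping of the change of variables in the \emph{asymmetric} setting. In Theorem~\ref{thepart} both agents run GA-SPP, so both update rules rescale in the same symmetric way; here the GA agent has no prediction line while the GA-SPP agent still forecasts the opponent, so I must check that the two structurally different update rules, together with the possibly different effective step-size scalings introduced by the $\sqrt{\delta}$ factors, still read off as legitimate GA and GA-SPP updates for the auxiliary PSD game and remain compatible with Conditions 1--3 (or can be absorbed into step sizes for which the boundedness-only convergence argument of Theorem~\ref{thesemi} still applies). Once this transcription is verified, the remainder is a direct copy of the earlier proofs, and the opposite role assignment (column agent running GA-SPP) is handled identically.
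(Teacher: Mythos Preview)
Your proposal is correct and matches the paper's intent exactly: the paper omits the proof and states only that it is ``similar to that of Theorem~\ref{thepart},'' which is precisely the template you follow, swapping Theorem~\ref{thesemi} for Theorem~\ref{thesemi1} and Proposition~\ref{prop1} for its asymmetric analogue. Your identification of the only nontrivial checkpoint---that the rescaled asymmetric dynamics still read as valid GA/GA-SPP updates over a bounded convex set---is apt, and beyond that there is nothing more to verify.
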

    The proof of this theorem is omitted, which is similar to that of Theorem \ref{thepart}.
     
\subsection{$2\times 2$ General-Sum Games}
In this section, we will prove the Nash convergence of GA-SPP in $2\times 2$ general-sum games.
\begin{theo}\label{theorem:22_general_sum_game}
	If, in a 2-agent, $2\times 2$, iterated general-sum game, both agents follow the GA-SPP algorithm (with  Condition 1, 2, and 3), then their strategies will converge to a Nash equilibrium. 
\end{theo}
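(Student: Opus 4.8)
The plan is to reduce the $2\times 2$ case to the results already proved whenever possible and to isolate the single regime that genuinely escapes them. Specializing the $2\times n$ formulas to $n=2$, the strategies $\alpha,\beta$ become scalars in $[0,1]$ and the gradients collapse to the affine scalars $\partial_\alpha V_r = u_r\beta + b_r$ and $\partial_\beta V_c = u_c\alpha + b_c$, with $u_r,u_c,b_r,b_c\in\Re$. By Proposition~\ref{prop1} it suffices to prove that the GA-SPP iterates converge, since the limit is then automatically a Nash equilibrium. I would organize the argument by the sign of the invariant product $u_r u_c$.

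First, when $u_r u_c < 0$, set $\delta = -u_r/u_c > 0$; then $u_r + \delta u_c = 0$, so the game falls exactly under Theorem~\ref{thepart} (its $n=2$ instance), which already delivers convergence to a Nash equilibrium. This regime includes the purely rotational case that defeats plain IGA, as well as the PSD line $u_r + u_c = 0$ covered by Theorem~\ref{thesemi}. Second, when $u_r u_c = 0$, say $u_c = 0$, the column player's gradient $b_c$ is constant, so $\beta_k$ moves monotonically to an endpoint of $[0,1]$ (or is stationary if $b_c=0$); once $\beta$ is fixed the row player's gradient is constant and $\alpha_k$ likewise converges to an endpoint. This degenerate case reduces to two decoupled one-dimensional projected ascents and converges directly.

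The remaining and genuinely new regime is $u_r u_c > 0$, which is the main obstacle because the game is not PSD: the normalized value function satisfies $\Phi(w,w)-\Phi(w,v)-\Phi(v,w)+\Phi(v,v) = (u_r+u_c)(\alpha^2-\alpha^1)(\beta^2-\beta^1)$, which changes sign, so neither the variational-inequality Lyapunov argument of Theorem~\ref{thesemi} nor the rescaling of Theorem~\ref{thepart} (which needs $\delta>0$) applies. My approach here is to exploit a hidden potential structure: after scaling the row and column updates by $u_c$ and $u_r$ (of common sign), the combined field becomes the gradient of the bilinear potential $W(\alpha,\beta) = u_r u_c\,\alpha\beta + u_c b_r\,\alpha + u_r b_c\,\beta$. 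Equivalently, in centered coordinates $p=\alpha-\hat\alpha$, $q=\beta-\hat\beta$, where $\hat\beta=-b_r/u_r$ and $\hat\alpha=-b_c/u_c$ annihilate the two gradients, the underlying flow preserves the hyperbolic quantity $u_c p^2 - u_r q^2$; the dynamics are therefore saddle-type with unstable directions aimed at two opposite corners of $[0,1]^2$. I would then show that each GA-SPP trajectory is eventually captured in a quadrant adjacent to such a corner, where both projected updates push the iterate monotonically into that corner and pin it there. The corner is a strict pure Nash equilibrium, and the interior fixed point $(\hat\alpha,\hat\beta)$, if it lies in the square and is reached, is itself a mixed equilibrium by Lemma~\ref{lemma1}.

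The crux, and where the finite step size must be controlled through Conditions 1--3, is converting this continuous saddle picture into a rigorous statement for the discrete two-step (predict-then-update) projected iteration: I must rule out finite-step oscillation across the dividing lines $\alpha=\hat\alpha$ and $\beta=\hat\beta$ and establish genuine convergence of the sequence, not merely $\|\bm{v_{k+1}}-\bm{v_k}\|\to 0$. I expect this to follow from showing that, once the sign pattern of the two gradients stabilizes, the projection onto $[0,1]$ renders the relevant corner absorbing, so that the shrinking-prediction Step~3 terminates or the iterates settle. Combining the three regimes with Proposition~\ref{prop1} then completes the proof of Theorem~\ref{theorem:22_general_sum_game}.
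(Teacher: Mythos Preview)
Your case split by the sign of $u_r u_c$ and your treatment of $u_r u_c \le 0$ match the paper exactly: the degenerate case is the direct two-stage argument (the paper's Lemma~\ref{lemma:22_0}), and $u_r u_c < 0$ is reduced to Theorem~\ref{thepart} via $\delta = -u_r/u_c>0$ (the paper's Lemma~\ref{lemma:22_1}).

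For the saddle case $u_r u_c > 0$ your route diverges from the paper and carries a real gap. The paper does not use a potential or a conserved hyperbolic quantity; it diagonalizes the unconstrained one-step map via the eigenvector coordinates
\[
F = x + \sqrt{u_r/u_c}\, y, \qquad G = x - \sqrt{u_r/u_c}\, y,
\]
so that an unconstrained GA-SPP step acts as $F_{k+1} = (1+\eta\lambda_1)F_k$ and $G_{k+1} = (1+\eta\lambda_2)G_k$ with $1+\eta\lambda_1 > 1$ and $0 < 1+\eta\lambda_2 < 1$ under Conditions~1--3. The sign of $F$ is then invariant and $|F|$ is monotone along the discrete orbit, which is what eventually forces the trajectory onto a boundary and into a corner; the paper further splits on whether $(\alpha^c,\beta^c)$ lies inside, half inside, or outside $[0,1]^2$ (Properties~\ref{property:22_3_0}--\ref{property:22_3_2}) to handle the projections.

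Your proposed dividing lines $\alpha=\hat\alpha$ and $\beta=\hat\beta$ are \emph{not} the invariant manifolds of the saddle; those are $F=0$ and $G=0$, tilted with slope $\pm\sqrt{u_c/u_r}$. Consequently the ``sign pattern of the two gradients'' (equivalently, the $(x,y)$-quadrant) is not monotone along orbits, and its eventual stabilization is exactly the nontrivial thing you still have to prove, not a consequence you can assume. Your hyperbolic quantity $u_c p^2 - u_r q^2$ equals $u_c\,FG$ and is conserved only by the continuous IGA flow, not by the finite-step GA-SPP map, so it does not control the discrete iteration. And the potential $W$ you build governs the rescaled field $(u_c\,\partial_\alpha V_r,\ u_r\,\partial_\beta V_c)$, not the GA-SPP update; the coordinate-rescaling trick of Theorem~\ref{thepart} cannot be reused here because it needs $u_r+\delta u_c=0$ with $\delta>0$, impossible when $u_r u_c>0$. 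The right monotone object for the discrete projected iteration is $F$ itself, not the axis-aligned sign pattern or the product $FG$; switching to it, together with the center-location case split, is what closes the argument.
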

\begin{proof}
With Proposition~\ref{prop1}, in order to prove Theorem~\ref{theorem:22_general_sum_game}, we just need to prove the convergence of GA-SPP in $2\times2$ games, which is accomplished by Lemma~\ref{lemma:22_0},~\ref{lemma:22_1}, and~\ref{lemma:22_2}. 
\end{proof}

Next, we will analyze the structure of $2\times2$ games firstly, and then show the convergence in different cases respectively.
 
In a 2-agent, 2-action game, the reward functions (Eq.~\ref{equa:VrVc}) can be written as
\begin{equation}\nonumber\begin{aligned}
\label{vrvc2}
V_r(\alpha,\ \beta) & = r_{11}(\alpha \beta) + r_{12}(\alpha (1 - \beta))  + r_{21}((1 - \alpha) \beta)\\
& + \ r_{22}((1 - \alpha)(1 - \beta)),\\
V_c(\alpha,\ \beta) & = c_{11}(\alpha \beta) + c_{12}(\alpha (1 - \beta))  + c_{21}((1 - \alpha) \beta)\\
& + \ c_{22}((1 - \alpha)(1 - \beta)).
\end{aligned}\end{equation}
And the gradient function (Eq.~\ref{partialvrvc}) can be written as

\begin{equation}\nonumber\begin{aligned}
\label{partialvrvc2}
\partial_{\alpha} V_r(\alpha,\ \beta) = \frac{\partial V_r(\alpha,\ \beta)}{\partial \alpha} = u_r\beta + b_r,\\
\partial_{\beta} V_c(\alpha,\ \beta) = \frac{\partial V_c(\alpha,\ \beta)}{\partial \beta} = u_c\alpha + b_c,
\end{aligned}\end{equation}

\noindent where $u_r = r_{11} + r_{22} - r_{12} - r_{21}$,~$b_r = r_{12} - r_{22}$,~$u_c = c_{11} + c_{22} - c_{12} - c_{21}$, and~$b_c = c_{21} - c_{22}$. We have $|u_r|\leq2\delta_r,\ |u_c|\leq2\delta_c$.

 We can formulate the first two update rules of GA-SPP (\ref{alg:gaspp}):
\begin{equation}\label{gaspp2}
\begin{aligned}
	\alpha_{k+1}=\Pi_\Delta [\alpha_k+\eta\partial_{\alpha_k} V_r(\alpha_k,\ \Pi_\Delta[\beta_k+\gamma\partial_{\beta_k},\ \beta_k])],\\
	\beta_{k+1}=\Pi_\Delta [\beta_k+\ \eta\partial_{\beta_k} V_c(\beta_k,\ \Pi_\Delta[\alpha_k+\gamma\partial_{\alpha_k},\ \alpha_k])],
\end{aligned}
\end{equation}
  where $\Delta=\Delta_1=\Delta_2=[0,\ 1]$.
%

To prove the Nash convergence of GA-SPP, we will examine the dynamics of the strategy pair following GA-SPP. In a 2-agent, 2-action, general-sum game, $(\alpha,\ \beta)$ can be viewed as a point in $\mathbb{R}^2$ constrained to lie in the unit space.

According to Eq.~\ref{gaspp2}, if $(\alpha_k,\ \beta_k)$ is a unconstrained point, then value of $(\alpha_{k+1},\ \beta_{k+1})$ is
\begin{equation}\begin{aligned}\label{uncon}
	\begin{bmatrix}
	\alpha_{k+1}\\ \beta_{k+1}
	\end{bmatrix}\ -\ 
	\begin{bmatrix}
	\alpha_{k}\\ \beta_{k}
	\end{bmatrix}\
	= \eta
	&\begin{bmatrix}
	\gamma_k u_r u_c & u_r\\ u_c & \gamma_k u_r u_c
	\end{bmatrix}
	\begin{bmatrix}
	\alpha_{k}\\ \beta_{k}
	\end{bmatrix} \\
	+ \eta
	&\begin{bmatrix}
	\gamma_k u_r b_c+b_r\\ \gamma_k u_c b_r+b_c
	\end{bmatrix}.
\end{aligned}\end{equation}

We denote the $2\times 2$ matrix in Eq.~\ref{uncon} as $\mathrm{U}$. If the matrix $\mathrm{U}$ is invertible, in the unconstrained condition, there exists and only exists one point so that the left hand side of Eq.~\ref{uncon} is zero. We call this point the center (or origin) and denote it as $(\alpha^c,\ \beta^c)$. The eigenvalues of $\mathrm{U}$ is given by
\begin{equation}\begin{aligned}\label{eigenvalue}
\lambda_1=\gamma_k u_r u_c+\sqrt{u_r u_c}\ \ \text{and} \ \ \lambda_2=\gamma_k u_r u_c-\sqrt{u_r u_c}.
\end{aligned}\end{equation}
According to Condition 2 ($4\gamma_k^2\delta_r\delta_c<1$) and $|u_r|\leq2\delta_r,\ |u_c|\leq2\delta_c$, then $\gamma_k^2u_r u_c<1$.
There are three cases of $\mathrm{U}$:
\begin{itemize}
	\item{Case 1:} $u_r u_c=0$, \ie, $\mathrm{U}$ is not invertible;
	\item{Case 2:} $u_r u_c<0$, \ie, having two imaginary conjugate eigenvalues with negative real;
	\item{Case 3:} $u_r u_c>0$, \ie, having two real eigenvalues.
\end{itemize}
To prove Theorem~\ref{theorem:22_general_sum_game}, we only need to show that GA-SPP always leads the strategy pair to converge in these three cases.

\begin{figure}[t]
\setlength{\abovecaptionskip}{0.2cm}
\setlength{\belowcaptionskip}{-0.5cm}
	\includegraphics[width=8.8cm, height=4.4cm]{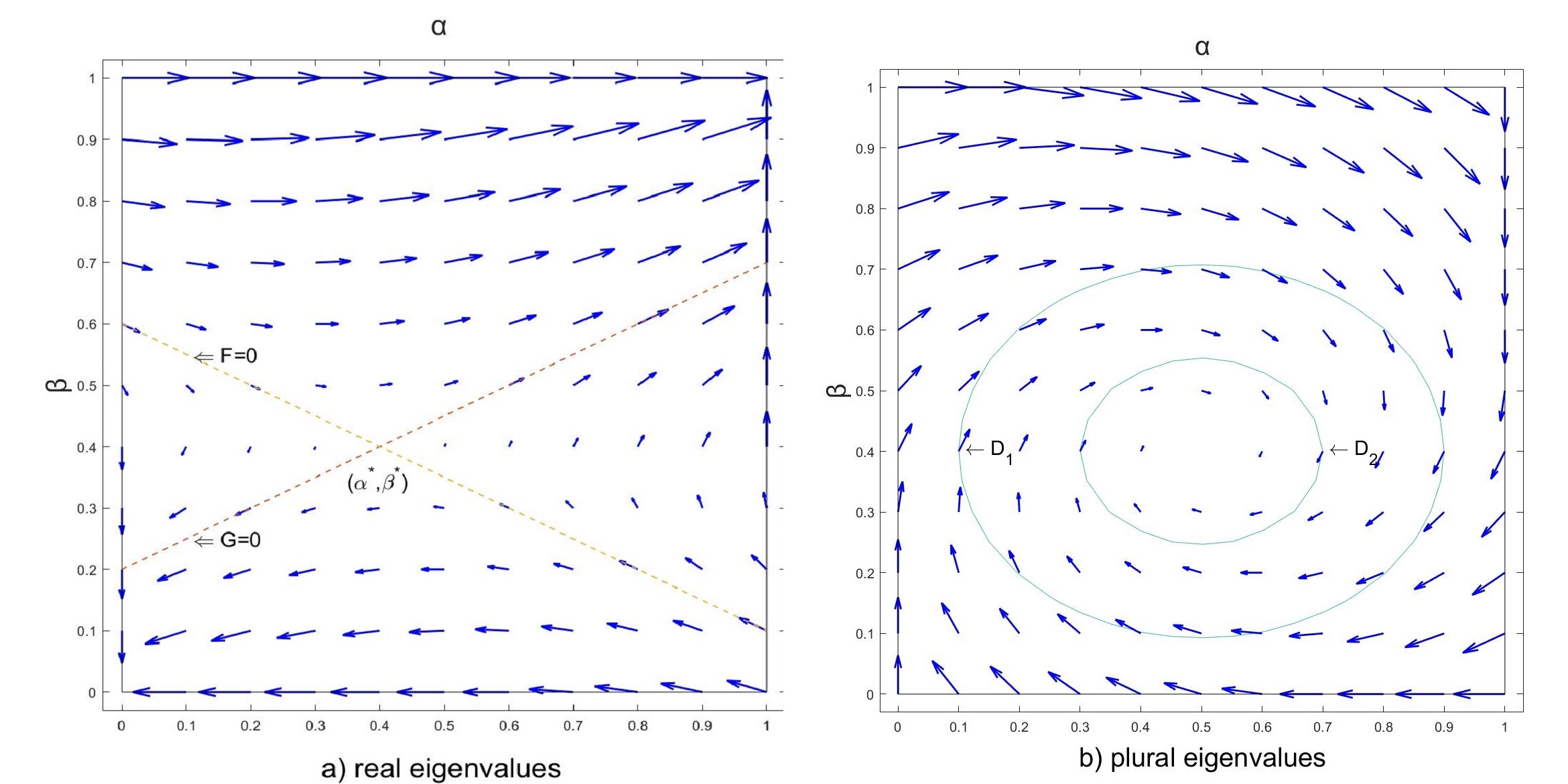}
	\caption{Strategy updating directions of the GA-SPP. a) when $\mathrm{U}$ has real eigenvalues and b) when $\mathrm{U}$ has imaginary eigenvalues with negative real part.}
\end{figure}

\begin{figure*}[t]
\setlength{\abovecaptionskip}{0cm}
\setlength{\belowcaptionskip}{-0.5cm}
    \subfigure[]{
	    \includegraphics[width=0.3\linewidth]{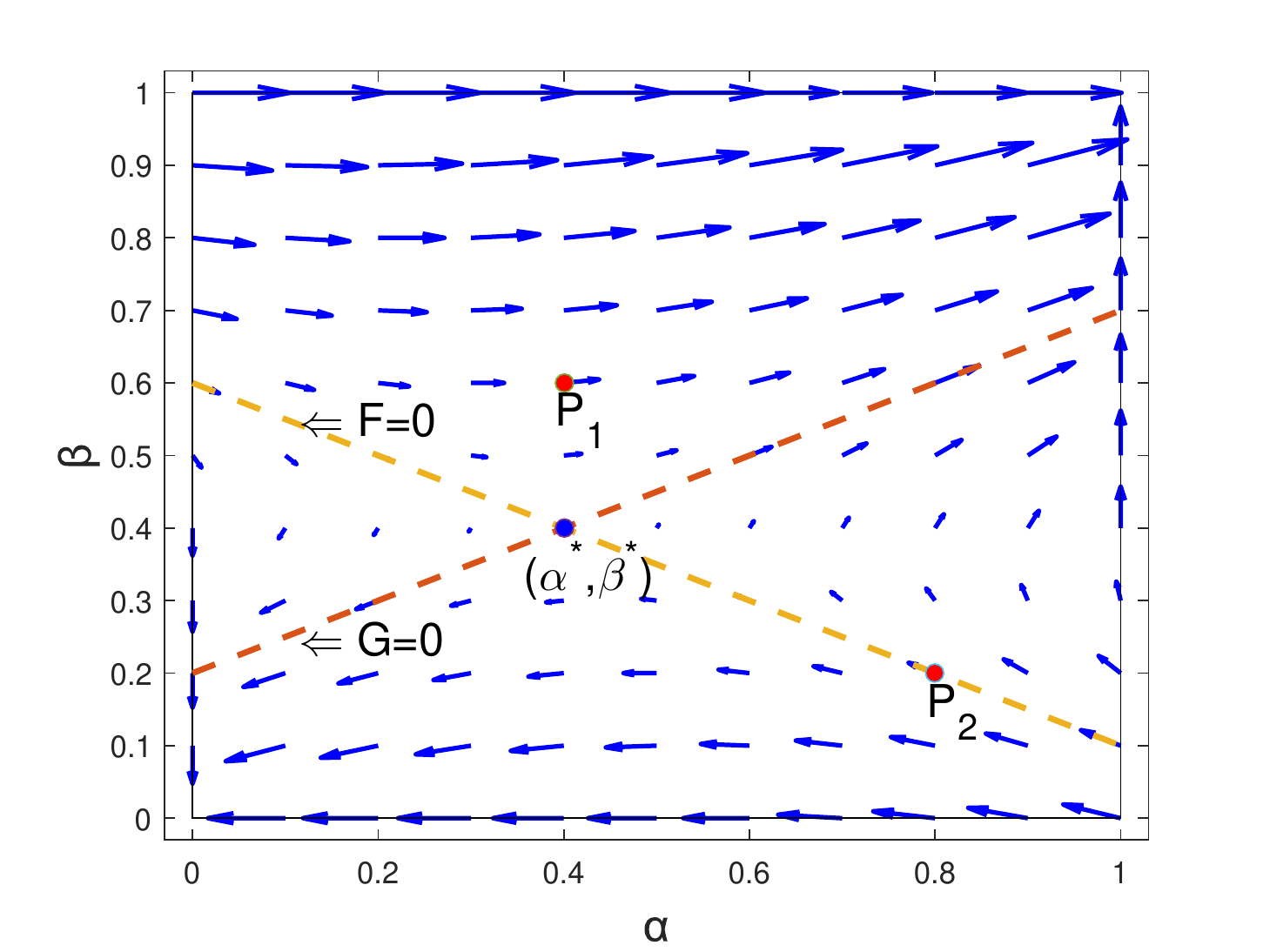}\label{fig:4_3_0}
	}\hfill
	\subfigure[]{
	    \includegraphics[width=0.3\linewidth]{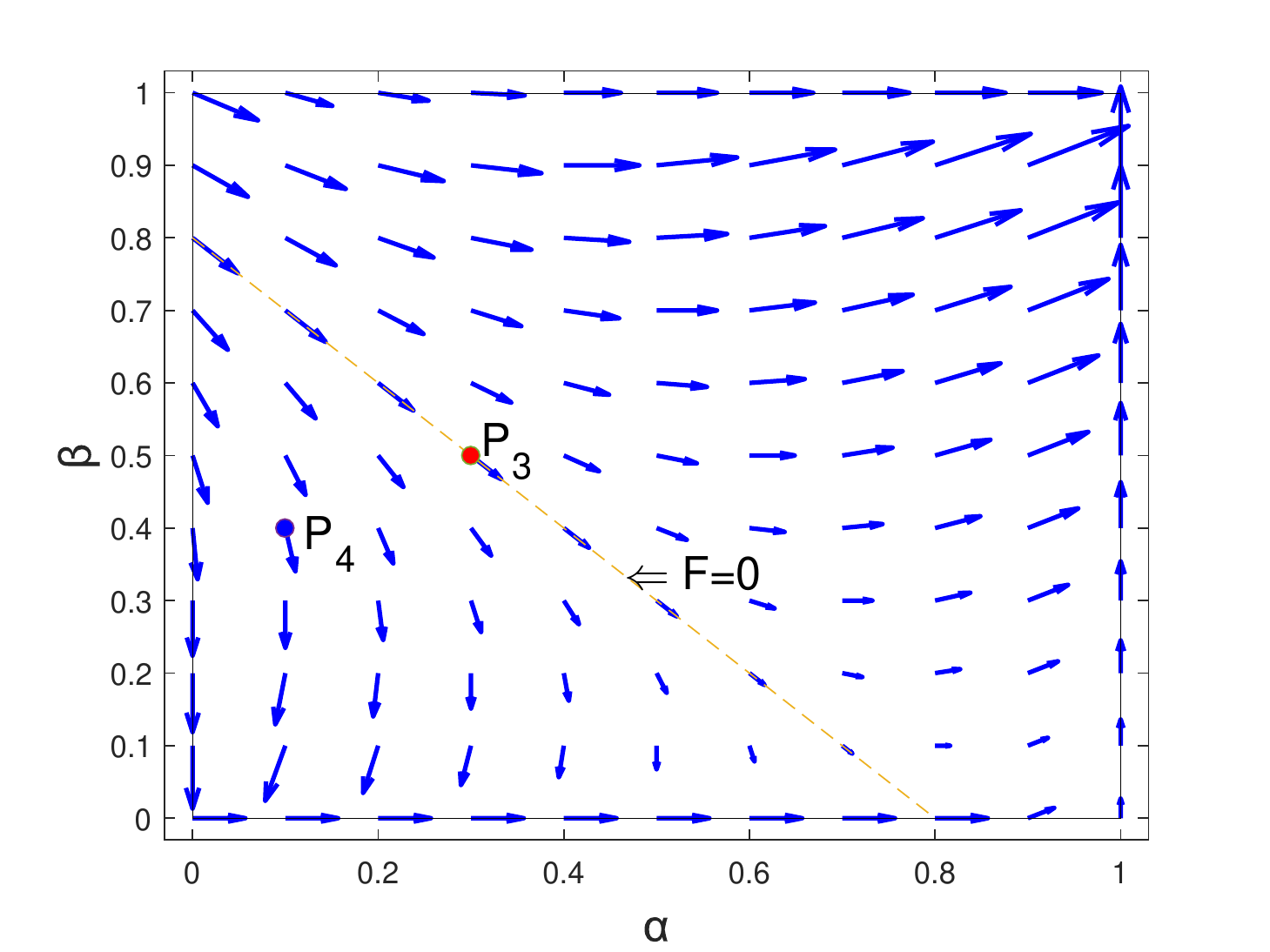}\label{fig:4_3_1}
	}\hfill
	\subfigure[]{
	    \includegraphics[width=0.3\linewidth]{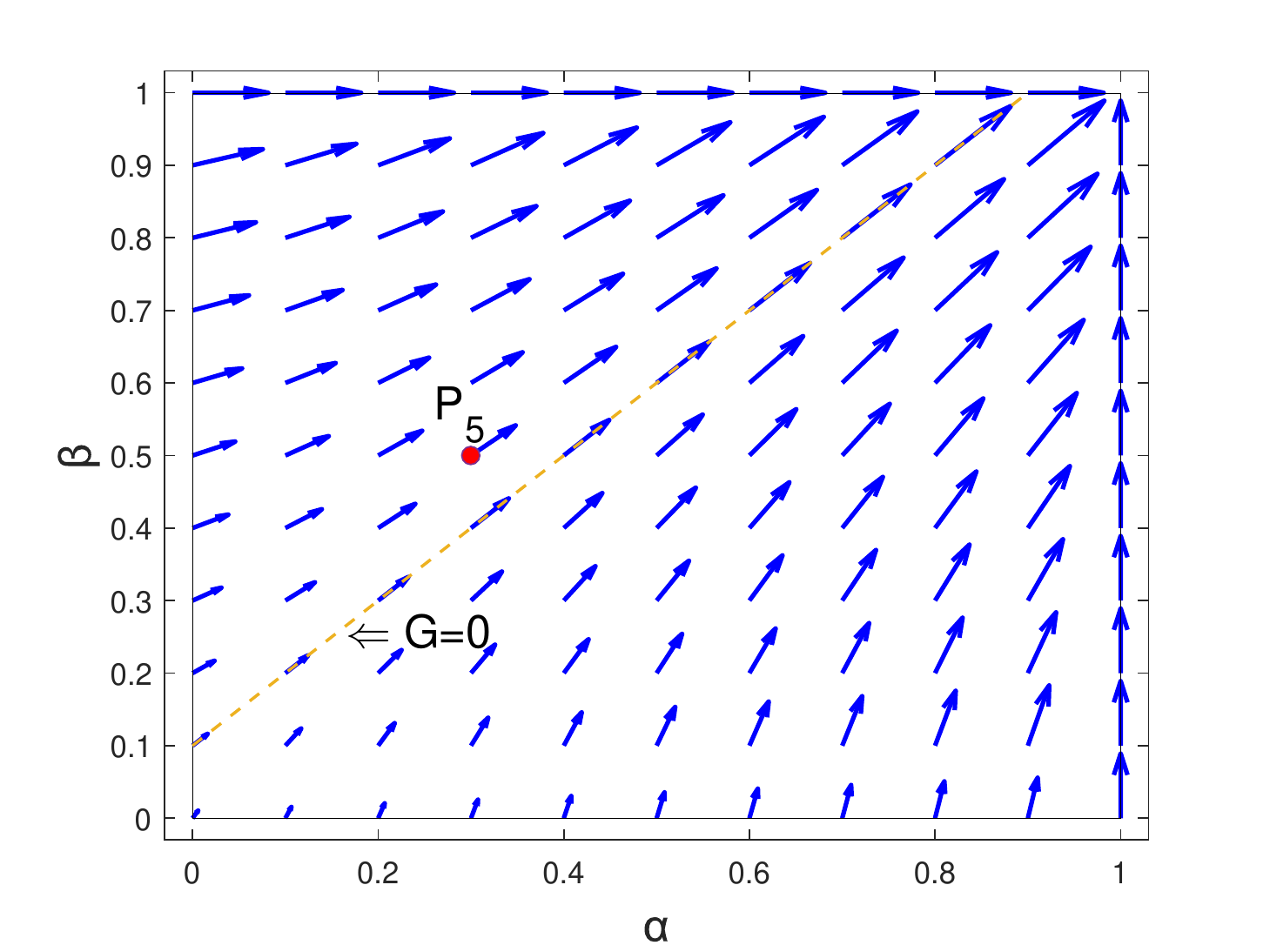}\label{fig:4_3_2}
	}
	\caption{Updating directions of strategy pair in Lemma~\ref{lemma:22_2}. (a) Both~$\alpha^c$ and~$\beta^c$ are in the valid probability range, $F(P_1)>0$, $F(P_2)=0$; (b) only one of $\alpha^c$ and $\beta^c$ is in the valid range, $F(P_3)=0$, $F(P_4)<0$; (c) neither $\alpha^c$ and $\beta^c$ is in the valid range.}
\end{figure*}

\begin{lemma}\label{lemma:22_0}
	If, in a 2-agent, 2-action, iterated general-sum game, $\mathrm{U}$ is not invertible, for any initial strategy pair, GA-SPP leads the strategy pair trajectory to converge to a Nash equilibrium (NE) with finite steps.	
\end{lemma}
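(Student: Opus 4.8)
The plan is to exploit the degeneracy of $\mathrm{U}$ directly, rather than through the variational-inequality machinery of Theorem~\ref{thesemi}, since Case~1 games need not be positive semi-definite. The matrix $\mathrm{U}$ failing to be invertible is, by Eq.~\ref{eigenvalue}, exactly the condition $u_r u_c = 0$, so $u_r = 0$ or $u_c = 0$. Because the row and column players play symmetric roles in Algorithm~\ref{alg:gaspp} (the triples $(\alpha, u_r, b_r)$ and $(\beta, u_c, b_c)$ are interchangeable), I would argue the case $u_r = 0$ and obtain $u_c = 0$ by the same reasoning, with the both-zero situation handled by either branch. The crucial observation is that $u_r = 0$ makes the row gradient $\partial_\alpha V_r(\alpha, \beta) = u_r\beta + b_r = b_r$ constant in $\beta$; hence both the prediction $\overline{\alpha}_{k+1} = \Pi_\Delta[\alpha_k + \gamma_k b_r]$ and the update $\alpha_{k+1} = \Pi_\Delta[\alpha_k + \eta b_r]$ depend only on $\alpha_k$, so the $\alpha$-iteration fully decouples from $\beta$ and collapses to a one-dimensional clipped-linear recursion.

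First I would analyze this recursion by the sign of $b_r$. When $b_r > 0$ the map $\alpha \mapsto \Pi_\Delta[\alpha + \eta b_r] = \min\{1, \alpha + \eta b_r\}$ raises $\alpha$ by the fixed amount $\eta b_r$ each step until it is pinned at the boundary $\alpha = 1$, which occurs after at most $\lceil 1/(\eta b_r) \rceil$ iterations; symmetrically $b_r < 0$ drives $\alpha$ to $0$ in finitely many steps, and $b_r = 0$ leaves $\alpha$ stationary from the outset. In every case there is a finite index $K_1$ and a value $\alpha^* \in [0,1]$ with $\alpha_k = \overline{\alpha}_{k+1} = \alpha^*$ for all $k \ge K_1$; I would verify in particular that once $\alpha$ sits at the driven boundary the prediction $\overline{\alpha}_{k+1}$ lands on the same boundary, so $\overline{\alpha}$ is genuinely frozen.

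Next, with $\overline{\alpha}_{k+1} \equiv \alpha^*$ frozen for $k \ge K_1$, the $\beta$-update becomes $\beta_{k+1} = \Pi_\Delta[\beta_k + \eta(u_c\alpha^* + b_c)]$, again a clipped-linear recursion but now with constant increment $g = u_c\alpha^* + b_c$. The identical monotone-to-boundary argument yields a finite index $K_2 \ge K_1$ and a value $\beta^*$ (a boundary of $[0,1]$ when $g \neq 0$, unchanged when $g = 0$) with $\beta_k = \beta^*$ for all $k \ge K_2$. Therefore the strategy pair reaches $(\alpha^*, \beta^*)$ in finitely many iterations, at which point the prediction step reproduces $(\alpha^*, \beta^*)$ and Step~3 of Algorithm~\ref{alg:gaspp} terminates. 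Since GA-SPP thus converges under Conditions~1--3, Proposition~\ref{prop1} guarantees that $(\alpha^*, \beta^*)$ is a Nash equilibrium, which completes the argument.

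I expect the main obstacle to be the bookkeeping around the projection $\Pi_\Delta$ at the boundary together with the degenerate zero-gradient cases: one must confirm that the prediction $\overline{\alpha}$ remains on the driven boundary so that the increment $g$ for $\beta$ is truly constant over $k \ge K_1$, and that coordinates which are stationary from the start (when $b_r = 0$, or $g = 0$) nonetheless land on a Nash equilibrium rather than an arbitrary interior point. The latter is precisely what Proposition~\ref{prop1} is invoked to certify, so the novel content of the lemma is really the finite-step convergence established by the two clipped-linear recursions above.
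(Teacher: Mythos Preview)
Your proposal is correct and follows essentially the same approach as the paper: exploit $u_r u_c=0$ to decouple one coordinate into a clipped-linear recursion with constant increment, freeze it in finitely many steps, then repeat for the second coordinate with the now-constant increment, and conclude Nash via Proposition~\ref{prop1}. The only cosmetic difference is that the paper fixes $u_c=0$ and freezes $\beta$ first, whereas you fix $u_r=0$ and freeze $\alpha$ first; the paper also remarks explicitly (as you anticipate in your ``bookkeeping'' paragraph) that the $\gamma_k$-shrinking branch of Step~3 is never triggered in this case.
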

\begin{proof} 
	From Eq.~\ref{eigenvalue}, if $\mathrm{U}$ is not invertible, then $u_r u_c=0$. Assume $u_c=0$  (the analysis for the case $u_r=0$ is analogous and thus omitted for brevity). 
	
	According to Eq.~\ref{uncon}, $\beta_{k+1}=\Pi_\Delta[\beta_k+\eta b_c]$. Because $\eta b_c$ is  constant and $\beta\in [0,\ 1]$, strategy $\beta$ will no longer change after finite steps. We denote this value by $\beta^*$. Then, $\alpha_{k+1}=\Pi_\Delta[\alpha_k+\eta(u_r\beta^*+b_r)]$. Because $\eta(u_r\beta^*+b_r)$ is a constant and $\alpha\in [0,\ 1]$, after a certain number (finite) of steps, strategy $\alpha$ will also stop changing. We denote this value by $\alpha^*$.
	So in this case, the strategy pair will converge to $(\alpha^*,\ \beta^*)$, and with Proposition~\ref{prop1}, this is a Nash equilibrium.
	
	Note that the index of $\eta_k$ was omitted in the proof, which is because the situation of $(\alpha_{k+1},\ \beta_{k+1})=(\alpha_k,\ \beta_k)\ \& \ (\overline{\alpha}_{k+1},\ \overline{\beta}_{k+1})\ne (\alpha_k,\ \beta_k) $ did not occur in this case. Lemma~\ref{lemma:22_2} also has this property.
\end{proof}

\begin{lemma}\label{lemma:22_1}
	If, in a 2-agent, 2-action, iterated general-sum game, $\mathrm{U}$ has two imaginary conjugate eigenvalues with negative real, for any initial strategy pair, GA-SPP leads the strategy pair trajectory to converge to a NE.
\end{lemma}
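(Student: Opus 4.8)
The plan is to prove that in Case~2 the projected GA-SPP map is a genuine contraction toward a unique fixed point, which Proposition~\ref{prop1} then certifies to be a Nash equilibrium. First I would locate the ``center'' explicitly: since $u_ru_c\neq 0$ the matrix $\mathrm{U}$ is invertible, so the unconstrained recursion~\ref{uncon} has a unique fixed point, and a direct substitution shows it is $(\alpha^c,\beta^c)=(-b_c/u_c,\,-b_r/u_r)$, independent of $\gamma_k$; this is the only candidate interior (mixed) equilibrium. Writing $\bm{x}_k=(\alpha_k-\alpha^c;\ \beta_k-\beta^c)$, the unconstrained step of Eq.~\ref{uncon} becomes the linear map $\bm{x}_{k+1}=(\mathrm{I}+\eta\mathrm{U})\bm{x}_k$, so the whole behaviour is encoded in the spectrum of $\mathrm{I}+\eta\mathrm{U}$.

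Second, I would expose the rotation structure by a diagonal rescaling. With $a=u_ru_c<0$ and $s=\sqrt{|a|}$, the substitution $\beta\mapsto\sqrt{|u_r/u_c|}\,\beta$ (which also absorbs the sign of $u_c$) turns $\mathrm{U}$ into the \emph{normal} matrix $\gamma_k a\,\mathrm{I}+s\,\mathrm{J}$, where $\mathrm{J}=\left(\begin{smallmatrix}0&1\\-1&0\end{smallmatrix}\right)$. Consequently $\mathrm{I}+\eta\mathrm{U}$ acts in the rescaled coordinates as an exact scaled rotation with factor
\[
\rho=\sqrt{(1+\eta\gamma_k a)^2+\eta^2 s^2},
\]
which is precisely the inward-spiralling picture drawn for the imaginary-eigenvalue case in Figure~1(b). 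The heart of the finite-step-size claim is to show $\rho<1$, which is equivalent to $\eta\,(1+\gamma_k^2 s^2)<2\gamma_k$. This is where Conditions 2 and 3 must enter: from $|u_r|\le 2\delta_r$, $|u_c|\le 2\delta_c$ one gets $\gamma_k^2 s^2\le 4\gamma_0^2\delta_r\delta_c<1$, while $\eta,\gamma_0<1/(\delta_r+\delta_c)$ forces $\eta s<1$, and I would use these to confine $\gamma_k$ to the interval on which the quadratic $2\gamma-\eta(1+\gamma^2 s^2)$ is positive, giving $\rho<1$.

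Third, I would transfer this contraction to the true (projected) dynamics. After the rescaling the valid box $[0,1]^2$ remains axis-aligned, so $\Pi$ is coordinatewise clamping and hence non-expansive in the rescaled Euclidean norm; the genuine GA-SPP step is the composition of the $\rho$-scaled rotation about $(\alpha^c,\beta^c)$ with this non-expansive projection, and is therefore a $\rho$-contraction of the box into itself. By the Banach fixed-point theorem it has a unique fixed point $(\alpha^*,\beta^*)$ to which the trajectory converges geometrically. This one argument handles both geometries uniformly: when $(\alpha^c,\beta^c)$ lies in the interior the fixed point is the center (a mixed NE), and when it lies outside the box the fixed point sits on the boundary (a pure/boundary NE). Since GA-SPP then converges, Proposition~\ref{prop1} forces $(\alpha^*,\beta^*)$ to be a Nash equilibrium, which completes the lemma.

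The main obstacle is establishing $\rho<1$ for a finite step size, and reconciling it with Step~3 of Algorithm~\ref{alg:gaspp}. Unlike the infinitesimal analyses of IGA and IGA-PP, where the negative real part $\gamma_k u_ru_c$ of the eigenvalues \emph{alone} guarantees an inward spiral, the discrete map contracts only when the prediction length and step size are balanced so that $\eta(1+\gamma_k^2 s^2)<2\gamma_k$; I would therefore have to verify carefully that Conditions 1--3 keep $\gamma_k$ inside this contraction regime and, at boundary configurations where the real update is pinned while the prediction still moves, that the shrinking rule does not collapse $\gamma_k$ to a value with $\rho\ge 1$. The second delicate point is the interaction of the box-projection with the rotation on the boundary: one must rule out the possibility that the clamped trajectory orbits indefinitely rather than being absorbed into the boundary fixed point, which is exactly the role the scaled-rotation contraction (together with non-expansiveness of $\Pi$) is designed to play.
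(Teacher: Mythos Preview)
Your approach is quite different from the paper's. The paper gives a two-line reduction: since $u_ru_c<0$ there is $\delta>0$ with $u_r+\delta u_c=0$, which is exactly the $n=2$ instance of Theorem~\ref{thepart}; the lemma is then an immediate corollary of the already-established $2\times n$ result (which in turn is a rescaling of the positive semi-definite case of Theorem~\ref{thesemi}, proved by a variational-inequality Lyapunov argument, not a spectral contraction).

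The obstacle you single out at the end is in fact fatal to your route as stated. The inequality $\eta(1+\gamma_k^2 s^2)<2\gamma_k$ is \emph{not} a consequence of Conditions~1--3: those conditions impose no relation between $\eta$ and $\gamma_0$, and the quadratic $2\gamma-\eta(1+\gamma^2 s^2)$ is positive only for $\gamma$ above the threshold $\gamma_-=\eta/(1+\sqrt{1-\eta^2 s^2})\in(\eta/2,\eta)$; so whenever $\gamma_0<\eta/2$ the initial prediction length already lies outside the contraction regime, and Step~3 can only shrink it further. Concretely, take $\delta_r=\delta_c=1$, a game with $u_r=2$, $u_c=-2$ and interior center $(\tfrac12,\tfrac12)$, and $\gamma_0=0.1$, $\eta=0.4$: Conditions~1--3 all hold, yet $\rho^2=(1-\eta\gamma_0 s^2)^2+\eta^2 s^2=0.84^2+0.8^2\approx 1.35>1$. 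Near the interior center no projection is active, so one GA-SPP step is exactly the linear map $I+\eta\mathrm{U}$ and the (rescaled, here trivially so since $|u_r/u_c|=1$) distance to the center strictly increases; the Banach fixed-point argument therefore cannot go through. A secondary issue is that modeling a GA-SPP step as ``scaled rotation then project'' drops the inner projection of the predicted pair $(\overline{\alpha},\overline{\beta})$ in Step~1 of Algorithm~\ref{alg:gaspp}, so on the boundary the map is not the composition you describe. The paper's reduction sidesteps both issues by never asserting a strict spectral contraction.
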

\begin{proof}
	Since $u_r u_c<0$, there exists a $\delta$ such that $u_r+\delta u_c=0$. This is a 2-dimensional situation of Theorem~\ref{thepart}. So GA-SPP can converge to a NE.
\end{proof} 

In the rest of this section, we will introduce Lemma~\ref{lemma:22_2} and the basic idea of proof. For the detailed mathematical proof, we refer readers to the supplementary material. 
\begin{lemma}\label{lemma:22_2}
	If, in a 2-agent, 2-action, iterated general-sum game, $\mathrm{U}$ has real eigenvalues, for any initial strategy pair, GA-SPP leads the strategy pair trajectory to converge to a point that is a NE.
\end{lemma}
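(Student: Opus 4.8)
The plan is to exploit the saddle-point structure that Case~3 forces on the linear part of the dynamics. First I would diagonalize the matrix $\mathrm{U}$ from Eq.~\ref{uncon}. Since $u_r u_c>0$, the eigenvalues in Eq.~\ref{eigenvalue} are real, and writing $\lambda_1=\sqrt{u_r u_c}\,(\gamma_k\sqrt{u_r u_c}+1)$ and $\lambda_2=\sqrt{u_r u_c}\,(\gamma_k\sqrt{u_r u_c}-1)$, Condition~2 ($\gamma_k^2 u_r u_c<1$) gives $\lambda_1>0>\lambda_2$: a genuine saddle. The eigenvectors are the two diagonals $\bm{v}^{(1)}\propto(\sqrt{u_r},\ \sqrt{u_c})$ (unstable) and $\bm{v}^{(2)}\propto(\sqrt{u_r},\ -\sqrt{u_c})$ (stable). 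For the one-step map $I+\eta\mathrm{U}$ the factors are $1+\eta\lambda_1>1$ and $1+\eta\lambda_2$; using $|u_r|\le2\delta_r$, $|u_c|\le2\delta_c$, the bound $2\sqrt{\delta_r\delta_c}\le\delta_r+\delta_c$, and Condition~3 ($\eta<1/(\delta_r+\delta_c)$), I would show $\eta|\lambda_2|<1$, so the map strictly expands along $\bm{v}^{(1)}$ and strictly contracts along $\bm{v}^{(2)}$. A direct computation also shows that the center $(\alpha^c,\ \beta^c)$ solving $\mathrm{U}(\alpha^c;\ \beta^c)+(\gamma_k u_r b_c+b_r;\ \gamma_k u_c b_r+b_c)=0$ is exactly the zero-gradient point $\alpha^c=-b_c/u_c$, $\beta^c=-b_r/u_r$, so if it is interior it is a fixed point and, by Lemma~\ref{lemma1}, a Nash equilibrium.

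Next I would introduce the scalar $F(\alpha,\ \beta)=\sqrt{u_c}\,(\alpha-\alpha^c)+\sqrt{u_r}\,(\beta-\beta^c)$, the coordinate along the unstable direction, whose zero set is the stable manifold $\bm{v}^{(2)}$ through the center. A short check shows $(\sqrt{u_c},\ \sqrt{u_r})$ is a left eigenvector of $\mathrm{U}$ for $\lambda_1$, so as long as no projection is active $F_{k+1}=(1+\eta\lambda_1)F_k$ and $|F_k|$ grows geometrically. Because the trajectory is confined to the bounded square $[0,1]^2$, this forces the iterates out of the interior unless $F_0=0$: whenever $F\ne0$ the expanding flow drives the strategy pair onto the boundary, and the sign of $F$ (as illustrated by $F(P_1)>0$ and $F(P_4)<0$ in the figure) selects the corner toward which it is pushed.

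I would then split into the three geometric sub-cases of the figure according to whether both, one, or neither of $\alpha^c,\ \beta^c$ lie in $[0,1]$, and argue the same reduction in each. Once the expanding dynamics pin one coordinate to a boundary value --- say $\alpha$ is projected to $\alpha^*\in\{0,1\}$ --- I would verify it stays pinned, since at that edge the gradient $\partial_\alpha=u_r\beta+b_r$ keeps the sign that produced the projection, so both the update and the prediction $\overline{\alpha}_{k+1}=\Pi_\Delta[\alpha^*+\gamma_k\partial_\alpha]$ remain at $\alpha^*$. With $\alpha$ frozen, $\partial_\beta V_c=u_c\alpha^*+b_c$ is constant, so $\beta_{k+1}=\Pi_\Delta[\beta_k+\eta(u_c\alpha^*+b_c)]$ is a constant-drift projected update that stops in finitely many steps --- exactly the situation already handled in Lemma~\ref{lemma:22_0}. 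The complementary sub-case $F_0=0$ places the iterates on the stable manifold, where the $\bm{v}^{(2)}$-contraction carries them either to the interior center (a Nash equilibrium by the first paragraph) or, if the center lies outside the square, again onto the boundary and into the previous argument.

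The hard part will be the boundary bookkeeping in the third step: because the linear dynamics are \emph{not} globally contractive, the Lyapunov/variational-inequality argument of Theorem~\ref{thesemi} is unavailable, and convergence rests entirely on the interplay between the expanding diagonal and the projection onto $[0,1]^2$. I expect the delicate points to be (i) ruling out indefinite oscillation near the stable manifold before a coordinate is pinned, and (ii) checking, uniformly across the three center-position sub-cases and both signs of $F$, that the pinned coordinate never re-enters the interior. Having shown that the trajectory reaches a genuine fixed point, Proposition~\ref{prop1} then certifies that the limit is a Nash equilibrium, completing the proof of Lemma~\ref{lemma:22_2} and hence, with Lemmas~\ref{lemma:22_0} and~\ref{lemma:22_1}, of Theorem~\ref{theorem:22_general_sum_game}.
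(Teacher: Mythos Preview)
Your proposal is correct and follows essentially the same approach as the paper: diagonalize $\mathrm{U}$ to expose the saddle structure, introduce the unstable-direction coordinate $F$ (the paper defines $F=x+\sqrt{u_r/u_c}\,y$ and its stable companion $G$, which are your $F$ up to a scalar), split into the three sub-cases by the location of $(\alpha^c,\beta^c)$, and within each sub-case split on the sign of $F_0$. The only minor difference is the boundary endgame: once the trajectory hits an edge, the paper finishes by showing $F_{k+1}>F_k$ persists (monotone-bounded argument to $F_{\max}$), whereas you argue pin-then-constant-drift; the paper's monotonicity route slightly eases your delicate point~(ii), but both are valid sketches and the paper likewise defers the bookkeeping to its supplement.
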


Before proof, we first introduce some variables to simplify the expressions.

If $\mathrm{U}$ is invertible, then $u_r u_c\ne0$. Let $x=\alpha+\frac{b_c}{u_c}$, $y=\beta+\frac{b_r}{u_r}$, Eq.~\ref{uncon} can be reformulated as:
\begin{equation}
    \begin{aligned}
		\label{eq:XY}
		\begin{bmatrix}
		x_{k+1}\\ y_{k+1}
		\end{bmatrix}-
		\begin{bmatrix}
		x_{k}\\ y_{k}
		\end{bmatrix}\ = \eta
		\begin{bmatrix}
		\gamma u_r u_c & u_r\\ u_c & \gamma u_r u_c
		\end{bmatrix}
		\begin{bmatrix}
		x_{k}\\ y_{k}
		\end{bmatrix}
    \end{aligned}
\end{equation}
By setting the left hand side of Eq.~\ref{eq:XY} to zero, we can get an equation, the only solution of which is $x=0$, $y=0$.

Now we give the proof of Lemma~\ref{lemma:22_2}.
\begin{proof} 
	From Eq.~\ref{eigenvalue}, real eigenvalues imply $u_r u_c>0$. So without the loss of generality, we assume that $u_r > 0$ and $u_c > 0$ (the analysis for the case $u_r<0$ and $u_c < 0$ is analogous and thus omitted).
	
	Proof of Lemma~\ref{lemma:22_2} depends on the location of $(\alpha^c,\ \beta^c)$, which has three possibilities:
	\begin{enumerate}
		\item both $\alpha^c$ and $\beta^c$ are in the valid probability range [0,1], 
        \item only one of $\alpha^c,\ \beta^c$ is in the valid probability range [0,1],
        \item neither  $\alpha^c$ nor $\beta^c$ is in the valid probability range [0,1].
	\end{enumerate}
	Proofs of convergence in these three cases are given in Property \ref{property:22_3_0}, \ref{property:22_3_1}, and \ref{property:22_3_2}, respectively.
\end{proof}

Notice that $\mathrm{U}$ has two real eigenvalues: $\lambda_1>0$ and $\lambda_2<0$ and two nonparallel eigenvectors. The central point $(x=0,\ y=0)$ with two eigenvectors $(\bm{v_1}=[\sqrt{u_r},\ \sqrt{u_c}],\ \bm{v_2}=[\sqrt{u_r},\ -\sqrt{u_c}])$ can form a new 2D coordinate system. The basic idea of proof is to analyze coordinates of the strategy pair update trajectory in the new coordinate system. To be brief, we introduce two functions to compute it instead of converting coordinate.
\begin{equation}\begin{aligned}
	F=x+\sqrt{\frac{u_r}{u_c}}y,\ \ \ G=x-\sqrt{\frac{u_r}{u_c}}y.
\end{aligned}\end{equation}

\begin{Property}\label{property:22_3_0}
If~$\ \mathrm{U}$ has real eigenvalues, both of $\alpha^c$, $\beta^c$ are in the valid probability range([0,\ 1]), GA-SPP leads the strategy pair trajectory to converge to a NE.
\end{Property}
\begin{proof}
As shown in Fig.~\ref{fig:4_3_0}, the initial point will affect the Nash convergence result because there are three Nash equilibrium points. 

The first case is~$F_0=0$. Then the strategy pair point will keep staying in the line $F=0$ while the absolute value of $G$ decreases, which means that the point moves to the center point~$(F=0,G=0)$, \ie, $(\alpha^*,\ \beta^*)$. For example, if $P_2$ is the initial point, the point will travel along the line~$F=0$ and moves to~$(F=0,G=0)$. We can compute $F_{k+1}$ and $G_{k+1}$ by
\begin{equation}
    \begin{aligned}
        &G_{k+1}=(1+\eta\lambda_2) G_k,\ \ \ &F_{k+1}=(1+\eta\lambda_1) F_k.
    \end{aligned}
\end{equation}   
\indent According to Condition 1, 2 and 3, $0<(1+\eta\lambda_2) <1$, so the GA-SPP will converge to $(F=0,\ G=0)$, \ie, $(x=0,\ y=0)$.

Another case is when~$F_0>0$, from Fig.~\ref{fig:4_3_0}, we can tell that the strategy pair point first touches the boundary $x_{max}\ (\alpha=1)$ or $y_{max}\ (\beta=1)$ after finite iteration steps, after then it travels along the boundary and moves to $(x_{max},\ y_{max})$. For example, if~$P_1$ is the initial point, the point will touch the boundary $\alpha=1$ (\ie, $x_{max}$), then it travels along $\alpha=1$ and moves to $(x_{max},\ y_{max})$.
Without no more than one exceptional case, we can derive $F_{k+1}>F_{k}$ in each iteration. From the monotone bounded theorem, the GA-SPP will converge to $F_{max}$, \ie, $(x_{max},\ y_{max})$.

Situation when~$F_0<0$ is similar to that when $F_0>0$, so we omit it for brevity.

In all, GA-SPP can converge for any initial strategy pair.
 \end{proof}
\begin{Property}\label{property:22_3_1}
When~$\mathrm{U}$ has real eigenvalues and only one of $\alpha^c$ and $\beta^c$ is in the valid probability range, GA-SPP leads the strategy pair trajectory to converge to a NE.
\end{Property}
Proof of Property~\ref{property:22_3_1} can be classified into 4 cases. Without loss of generality, we just consider one of them, where $\beta^c<0$ and $\alpha^c \in [0,1]$. As shown in Fig.~\ref{fig:4_3_1}, we can also divide the proof into 3 parts: $F_0>0$, $F_0<0$, and $F_0=0$. If the point is in the part $F>0$, according to Property~\ref{property:22_3_0}, the algorithm will converge to $(x_{max},\ y_{max})$. If $F_0\leq0$, we can see that the point will first touch a boundary of the valid probability space, after that it will move into the part $F>0$. For example, if $P_3\ (F=0)$ is the initial point, the point will travel along line $F=0$ until it hits the boundary, then it will be projected to the subspace where $F>0$. If $P_4\ (F<0)$ is the initial point, it will touch the boundary $\beta=0\ (y_{min})$ and then $y$ remains $y_{min}$ while $x$ increases until it move into the subspace where $F>0$.

\begin{Property}\label{property:22_3_2}
If~$\ \mathrm{U}$ has real eigenvalues and neither $\alpha^c$ nor $\beta^c$ is in the valid probability range [0, 1], GA-SPP leads the strategy pair trajectory to converge to a NE.
\end{Property}
Updating directions of strategy pair is shown in Fig.~\ref{fig:4_3_2} for this case. For the detailed proof, please refer to supplementary material.

\begin{theo}\label{theorem:22_general_sum_game1}
	If, in a 2-agent, $2\times 2$, iterated general-sum game, one agent follow the GA-SPP algorithm (with  Condition 1, 2, and 3), another agent uses GA, then their strategies will converge to a NE
\end{theo}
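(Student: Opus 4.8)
The plan is to follow the proof of Theorem~\ref{theorem:22_general_sum_game} almost verbatim, reducing the statement to (i) convergence of the strategy pair and (ii) a fixed-point argument identifying the limit as a Nash equilibrium, and then splitting the convergence analysis by the sign of $u_r u_c$. Without loss of generality take the row agent to run GA-SPP and the column agent to run GA (the opposite assignment yields a matrix with the same trace and determinant, hence the same analysis). The only change to the dynamics is that the column update loses its prediction term, so in the unconstrained regime Eq.~\ref{uncon} is replaced by the increment matrix
\[
\mathrm{U}'=\begin{bmatrix} \gamma_k u_r u_c & u_r\\ u_c & 0 \end{bmatrix},
\]
whose bottom-right entry is now $0$. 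Its characteristic equation is $\lambda^2-\gamma_k u_r u_c\lambda-u_r u_c=0$, so $\det\mathrm{U}'=-u_r u_c$ and the eigenvalues are $\lambda=\tfrac12\big(\gamma_k u_r u_c\pm\sqrt{\gamma_k^2 u_r^2 u_c^2+4u_r u_c}\big)$.

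I would then treat the same three cases. If $u_r u_c=0$, then $\mathrm{U}'$ is singular and, exactly as in Lemma~\ref{lemma:22_0}, the component with vanishing coupling sees a constant projected drift and freezes after finitely many steps, after which the other component also freezes; convergence is in finite steps. If $u_r u_c<0$, then Condition~2 together with $|u_r|\le 2\delta_r,\ |u_c|\le 2\delta_c$ gives $\gamma_k^2 u_r u_c\in(-1,0)$, hence $\gamma_k^2 u_r^2 u_c^2+4u_r u_c=u_r u_c(\gamma_k^2 u_r u_c+4)<0$, so the eigenvalues are complex with real part $\tfrac12\gamma_k u_r u_c<0$; since there is a $\delta>0$ with $u_r+\delta u_c=0$, this is the $n=2$ instance of Theorem~\ref{thepart1} (one agent GA-SPP, one agent GA in the $2\times n$ subclass) and convergence follows. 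If $u_r u_c>0$, then $\det\mathrm{U}'<0$, so the eigenvalues are real with opposite signs, $\lambda_1>0>\lambda_2$; this is the analogue of Lemma~\ref{lemma:22_2} and is where the bulk of the work lies.

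For step (ii) I would prove a mixed analogue of Proposition~\ref{prop1}. At a limit point $(\bm{\alpha^*},\ \bm{\beta^*})$ the GA column obeys $\bm{\beta^*}=\Pi_{\bm{\Delta_2}}[\bm{\beta^*}+\eta\bm{\partial_{\beta^*}}]$, while Step~3 of the GA-SPP row drives its forecast to $\bm{\beta^*}$, so that $\bm{\alpha^*}=\Pi_{\bm{\Delta_1}}[\bm{\alpha^*}+\eta\bm{\partial_{\alpha^*}}]$. The scaling argument behind Proposition~\ref{prop1} then upgrades these to $P_{\bm{\Delta_1}}(\bm{\alpha^*},\ \bm{\partial_{\alpha^*}})=\bm{0}$ and $P_{\bm{\Delta_2}}(\bm{\beta^*},\ \bm{\partial_{\beta^*}})=\bm{0}$, whence Lemma~\ref{lemma1} makes $(\bm{\alpha^*},\ \bm{\beta^*})$ a Nash equilibrium.

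The main obstacle will be redoing the geometry of Property~\ref{property:22_3_0}--\ref{property:22_3_2} for $\mathrm{U}'$. The eigenvectors are now $[\lambda_1,\ u_c]^{\mathrm{T}}$ and $[\lambda_2,\ u_c]^{\mathrm{T}}$ rather than the symmetric $[\sqrt{u_r},\ \pm\sqrt{u_c}]^{\mathrm{T}}$, so the auxiliary functions $F,\ G$ and the boundary bookkeeping must be rederived in the new eigencoordinates. The one genuinely new subtlety is the stable-direction factor: I would bound $\eta|\lambda_2|$ using $\sqrt{u_r u_c}\le\delta_r+\delta_c$ and Condition~3 to get $|1+\eta\lambda_2|<1$, but because $1+\eta\lambda_2$ may now be negative (unlike in Property~\ref{property:22_3_0}, where $0<1+\eta\lambda_2<1$) the clean monotone-convergence argument must be replaced by a contraction argument along that coordinate. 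The unstable direction still satisfies $1+\eta\lambda_1>1$ and drives the trajectory onto a boundary in finitely many steps, after which it converges to a corner, completing the argument.
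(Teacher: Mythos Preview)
Your proposal is correct and matches the paper's approach: the paper itself omits the proof entirely, saying only that it ``is similar to that of Theorem~\ref{theorem:22_general_sum_game},'' and your decomposition into the three sign cases of $u_r u_c$ with the modified matrix $\mathrm{U}'$ is exactly the intended adaptation. One simplification: your concern that $1+\eta\lambda_2$ might be negative is unfounded, since rationalizing gives $|\lambda_2|=\tfrac{2u_r u_c}{\sqrt{\gamma_k^2 u_r^2 u_c^2+4u_r u_c}+\gamma_k u_r u_c}\le\sqrt{u_r u_c}\le\delta_r+\delta_c$, so Condition~3 already yields $0<1+\eta\lambda_2<1$ and the monotone argument of Property~\ref{property:22_3_0} carries over without modification.
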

    The proof is omitted, which is similar to that of Theorem \ref{theorem:22_general_sum_game}.

\section{Experimental Analysis in Normal-Form Games}
In this section, we will illustrate GA-SPP in games with experiments and compare GA-SPP with IGA-PP and GIGA-WoLF, both of which have theoretical guarantees, in some larger games.

\subsection{Benchmark games}
We first illustrate the results of GA-SPP on four representative benchmark games presented in Tab.~\ref{tab:benchmark_games}. GA-SPP converges to NE in all of these games (Fig.~\ref{fig:benchmark_games}).

\begin{table}[h]
\setlength{\abovecaptionskip}{-0.1cm}
\setlength{\belowcaptionskip}{-0.2cm}
\caption{Benchmark games}\label{tab:benchmark_games}
\subtable[Prisoners' Dilemma]{
    \begin{tabular}{|p{0.96cm}|p{0.96cm}|p{0.96cm}|}
	    \hline
	    &Silent&Betray\\
	    \hline
        Silent&(-1,-1)& (-3,0)\\
	    \hline
        Betray&(0,-3)& (-2,-2)\\
	    \hline
    \end{tabular}\label{tab:prisoners' dilemma}
}\hfill
\subtable[Chicken]{
\begin{tabular}{|p{0.96cm}|p{0.96cm}|p{0.96cm}|}
	\hline
	&Swerve&Straight\\
	\hline
	Swerve&(-2,-2)& (1,-1)\\
	\hline
	Straight&(-1,1)& (-1,-1)\\
	\hline
\end{tabular}\label{tab:chicken}
}
\subtable[Battle of Sexes]{
\begin{tabular}{|p{0.96cm}|p{0.96cm}|p{0.96cm}|}
	\hline
	& Opera & Football\\
	\hline
    Opera&(3,2)& (1,1)\\
	\hline
    Football&(0,0)& (2,3)\\
	\hline
\end{tabular}\label{tab:battles of sexes}
}\hfill
\subtable[Rock-Paper-Scissors]{
\begin{footnotesize}
\begin{tabular}{|p{0.64cm}|p{0.64cm}|p{0.64cm}|p{0.64cm}|}
	    \hline
	    &R&P&S\\
	    \hline
        R&(0,0)&(-1,1)&(1,-1)\\
	    \hline
        P&(1,-1)&(0,0)&(-1,1)\\
        \hline
        S&(-1,1)&(1,-1)&(0,0)\\
	    \hline
\end{tabular}
\end{footnotesize}\label{tab:rock-paper-scissors}
}

\end{table}

\begin{figure}
\setlength{\abovecaptionskip}{0.1cm}
\setlength{\belowcaptionskip}{-0.5cm}
    \includegraphics[width=0.48\linewidth]{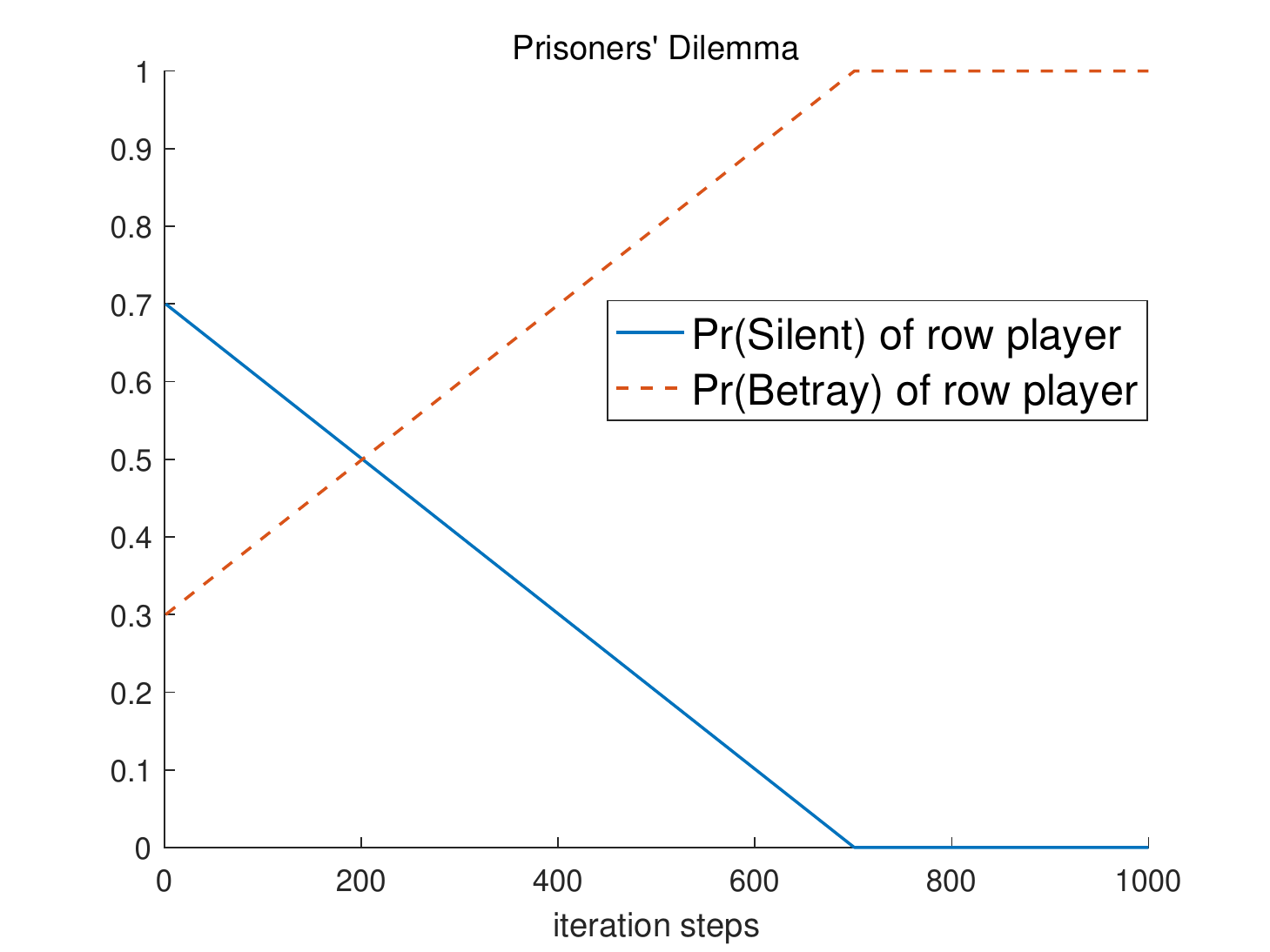}
    \includegraphics[width=0.48\linewidth]{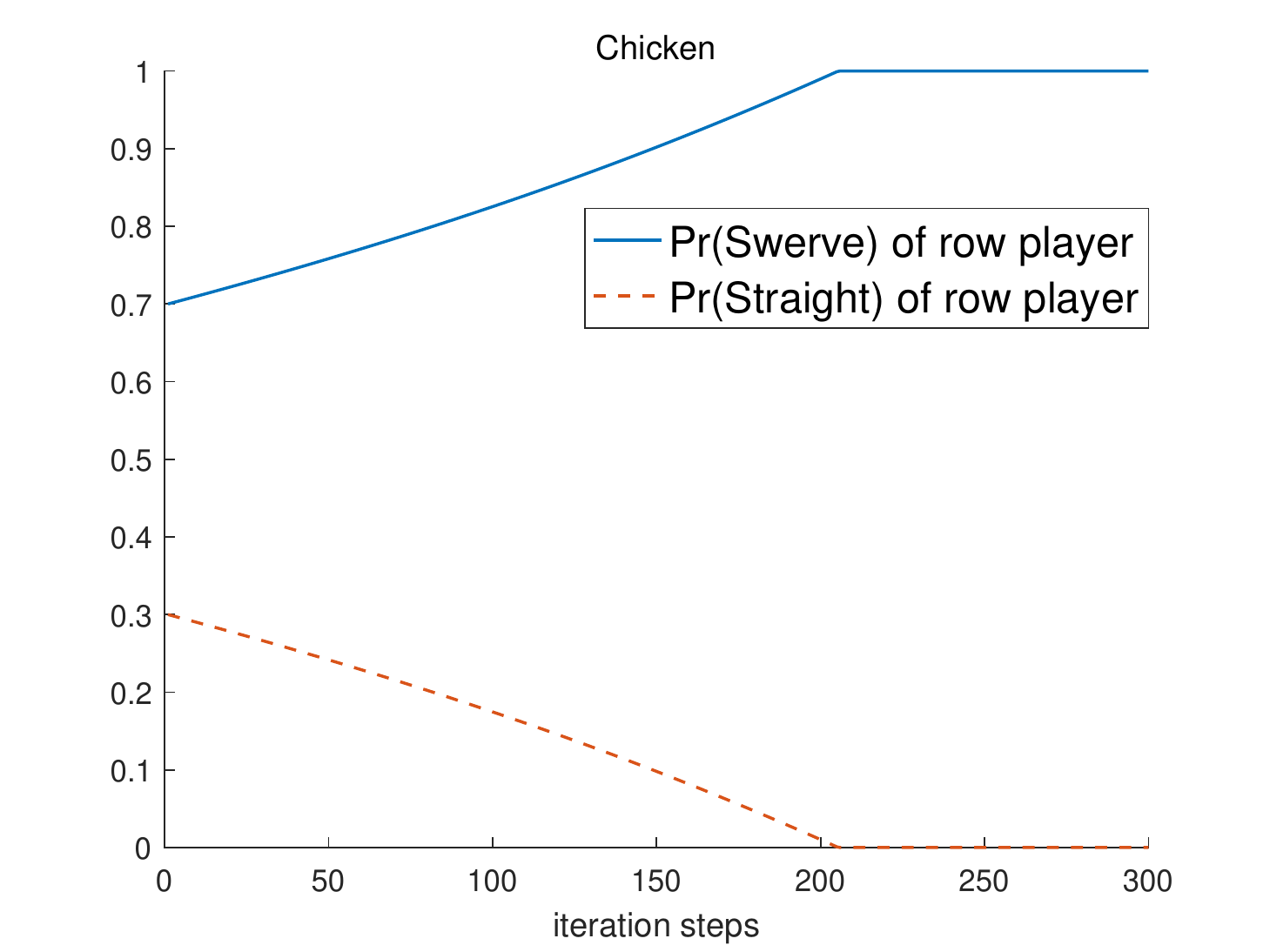}
    \includegraphics[width=0.48\linewidth]{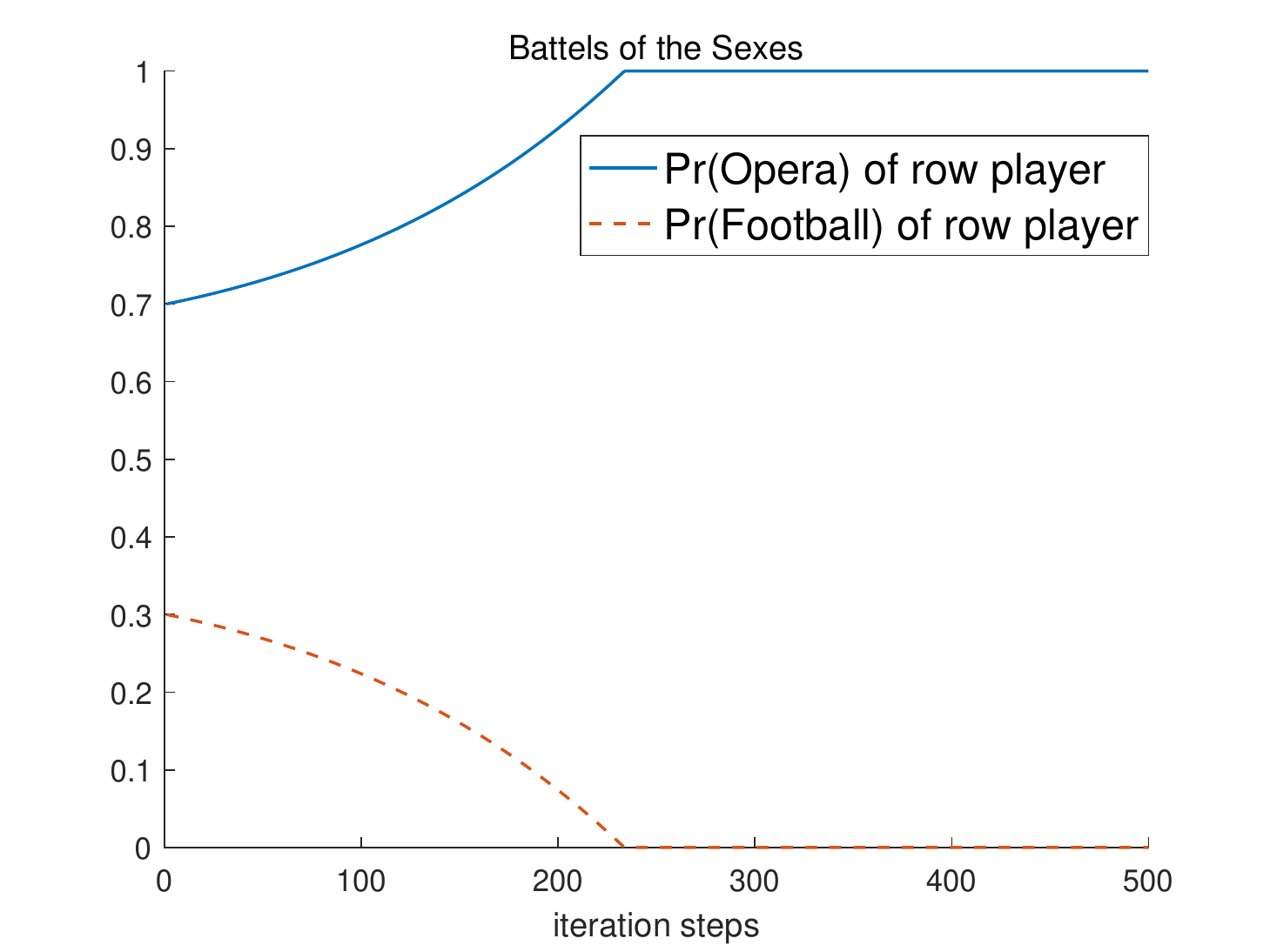}
    \includegraphics[width=0.48\linewidth]{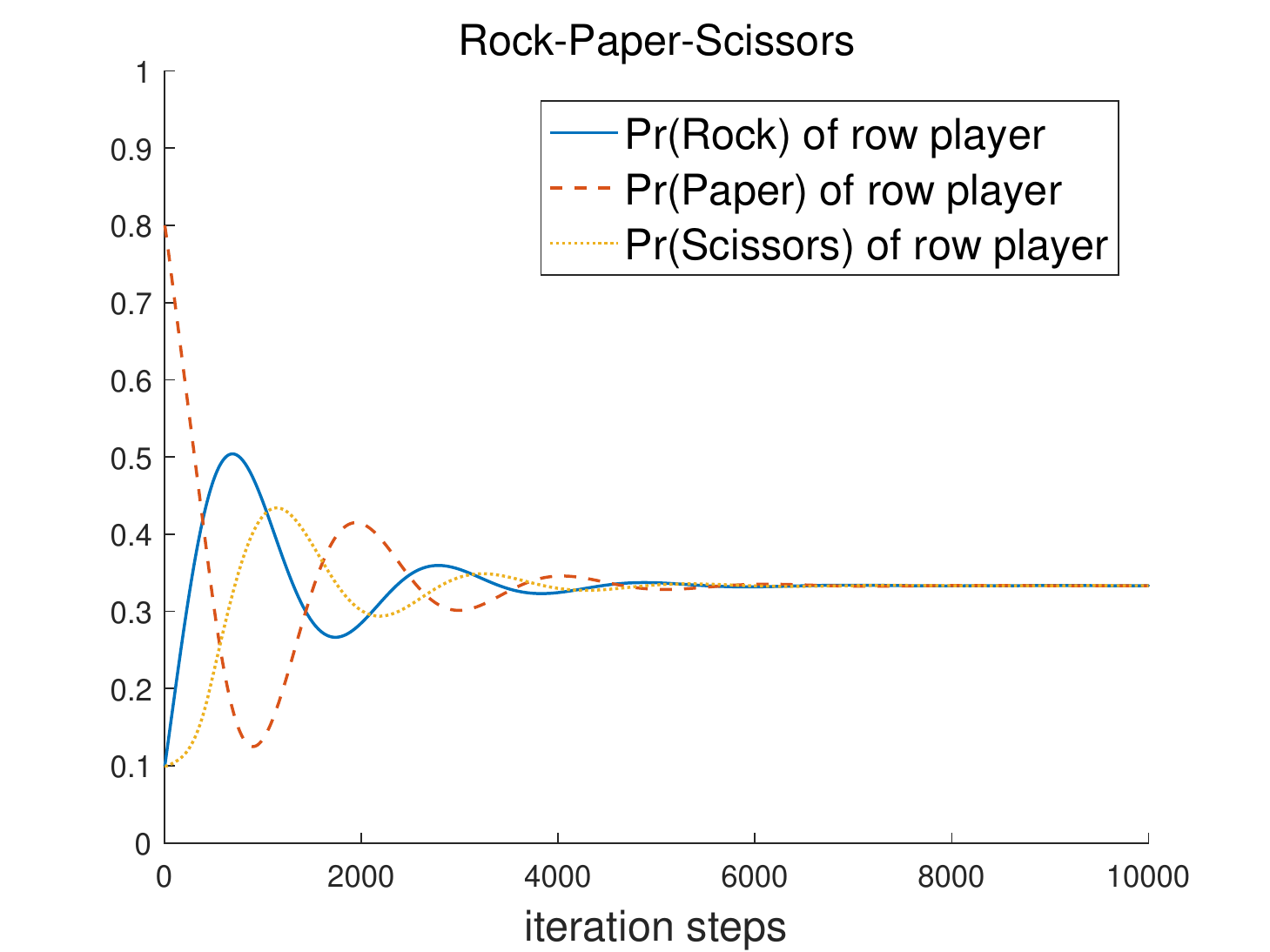}
    \caption{Action probabilities of row agent following GA-SPP in four benchmark games. Parameters: $\eta=0.001$, $\gamma=0.1$. Initial polices: $(0.7,\ 0.3)$ and $(0.3,\ 0.7)$.}
    \label{fig:benchmark_games}
\end{figure}

\subsection{ Games beyond theoretical settings}
We also evaluate GA-SPP in \emph{Shapley's game} and a $2\times 3$ game, presented in Tab.~\ref{tab:larger_games}. Although the theoretical analyses of GA-SPP have not covered these games, empirical results show that it still converge. We now compare GA-SPP, GIGA-WoLF, and IGA-PP in these two games. 

Fig.~\ref{fig:larger_games} shows the row player's action probabilities over time if both players follow GA-SPP, GIGA-WoLF, or IGA-PP in \emph{Shapley's game} respectively. GIGA-WoLF fails to converge in this non-zero sum game, but GA-SPP and IGA-PP can converge to a Nash equilibrium.

\begin{table}[h]
\setlength{\abovecaptionskip}{-0.1cm}
\caption{Games with larger settings}\label{tab:larger_games}
\subtable[Shapley's Game]{
    \begin{tabular}{|p{0.65cm}|p{0.65cm}|p{0.65cm}|p{0.65cm}|}
	    \hline
	    &C1&C2&C3\\
	    \hline
        R1&(0,0)&(1,0)&(0,1)\\
	    \hline
        R2&(0,1)&(0,0)&(1,0)\\
        \hline
        R3&(1,0)&(0,1)&(0,0)\\
	    \hline
    \end{tabular}\label{tab:shapley's_game}
}\hfill
\subtable[A 2x3 Game]{
    \begin{tabular}{|p{0.65cm}|p{0.65cm}|p{0.65cm}|p{0.65cm}|}
	    \hline
	    &C1&C2&C3\\
	    \hline
        R1&(3,3)&(0,5)&(1,-2)\\
	    \hline
        R2&(2,2)&(1,1)&(-1,0)\\
        \hline
    \end{tabular}\label{tab:a_2_3_game}
}

\end{table}

\begin{figure*}
\setlength{\abovecaptionskip}{0.2cm}
\setlength{\belowcaptionskip}{0cm}
    \includegraphics[width=0.3\linewidth]{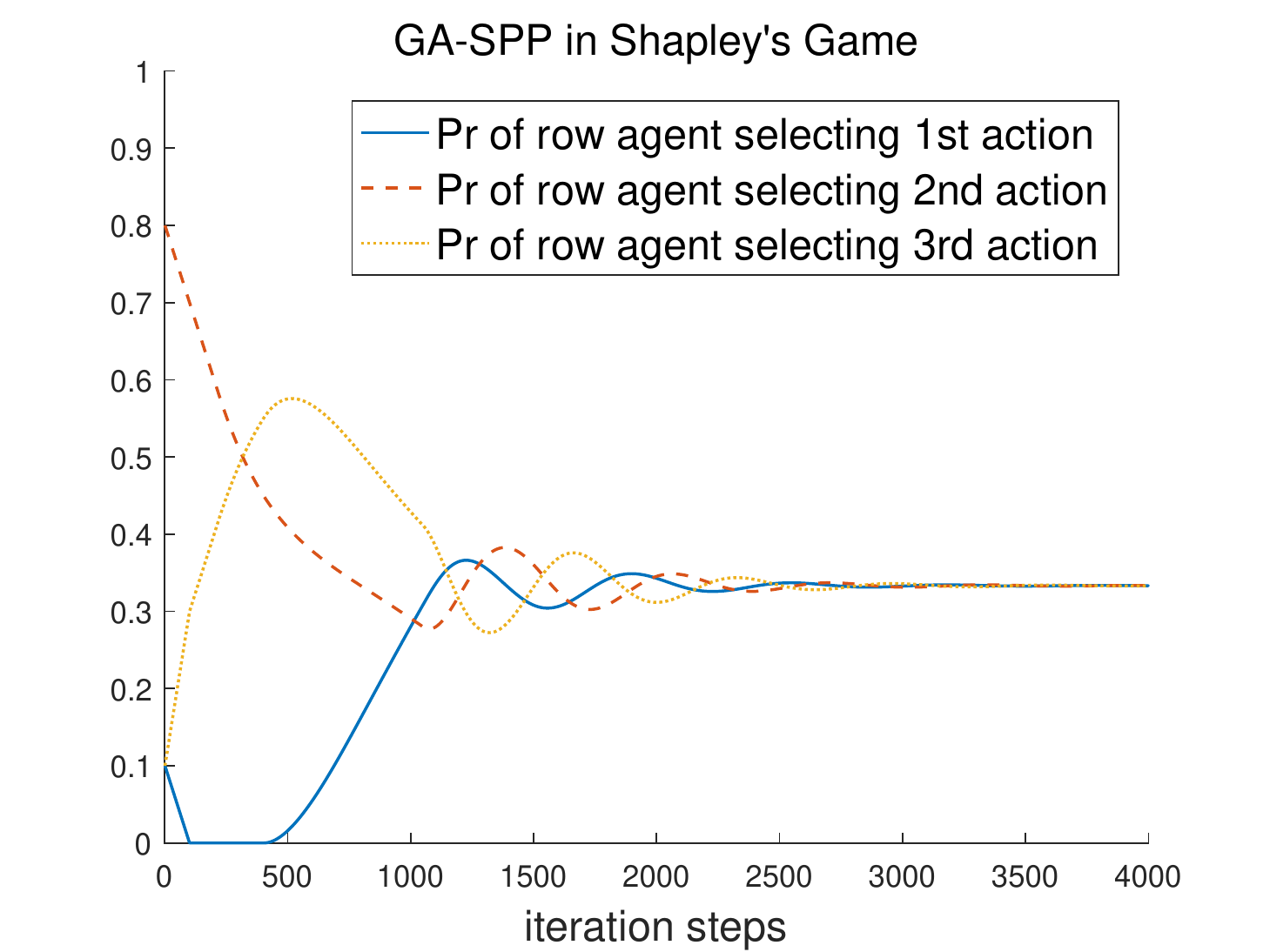}
    \includegraphics[width=0.3\linewidth]{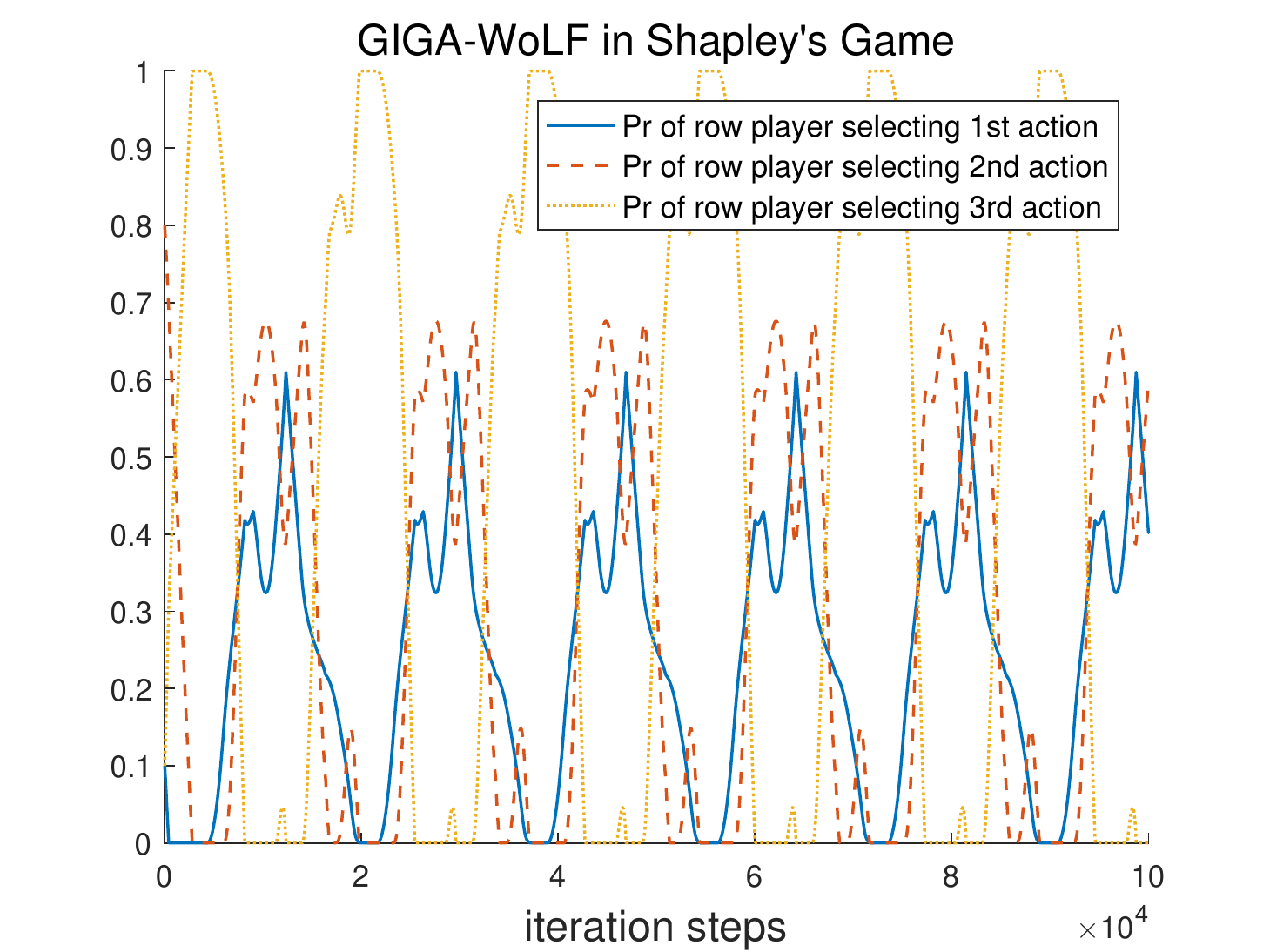}
    \includegraphics[width=0.3\linewidth]{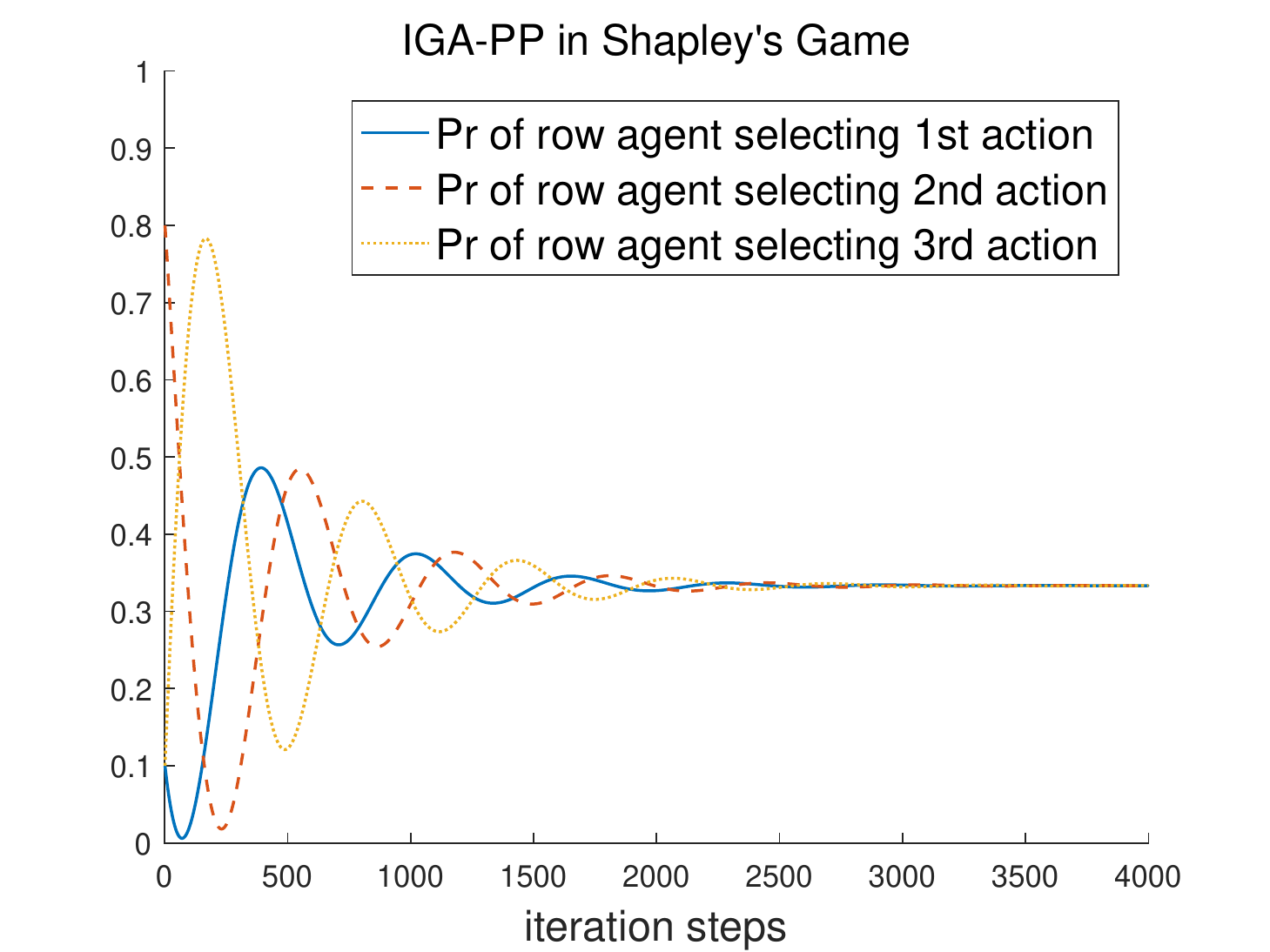}
    \caption{Comparison between GA-SPP, GIGA-WoLF, and IGA-PP in \emph{Shapley's game}. GIGA-WoLF cannot converge while GA-SPP and IGA-PP converge to NE. GA-SPP has a slighter oscillation. Parameters: $\gamma=3,\ \eta=0.001$. Initial polices: $(0.1,\ 0.8,\ 0.1)$ and $(0.8,\ 0.1,\ 0.1)$.}
    \label{fig:larger_games}
\end{figure*}

\begin{figure}
\setlength{\abovecaptionskip}{0.2cm}
\setlength{\belowcaptionskip}{-0.5cm}
    \includegraphics[width=0.48\linewidth]{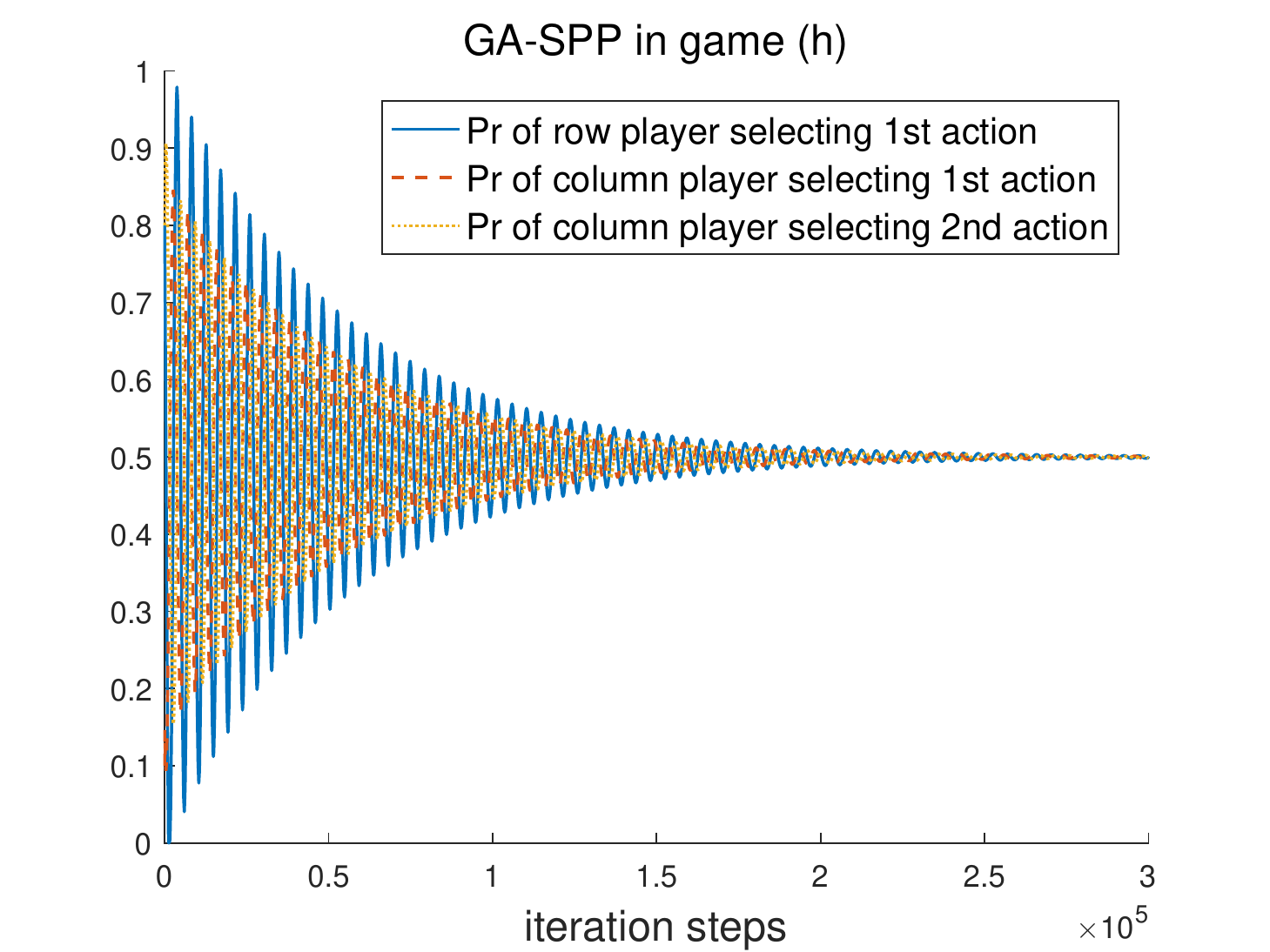}
    \includegraphics[width=0.48\linewidth]{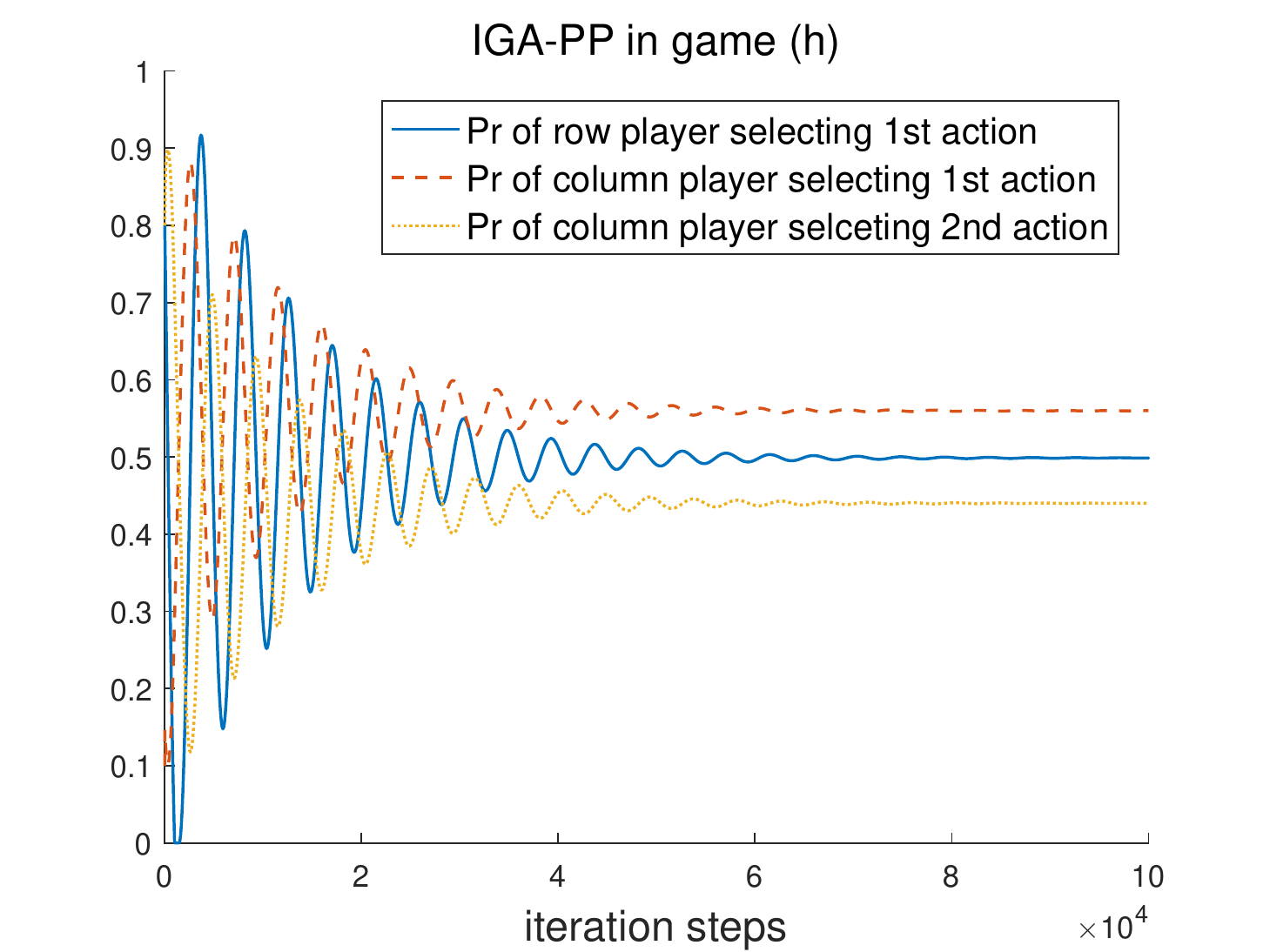}
    \includegraphics[width=0.48\linewidth]{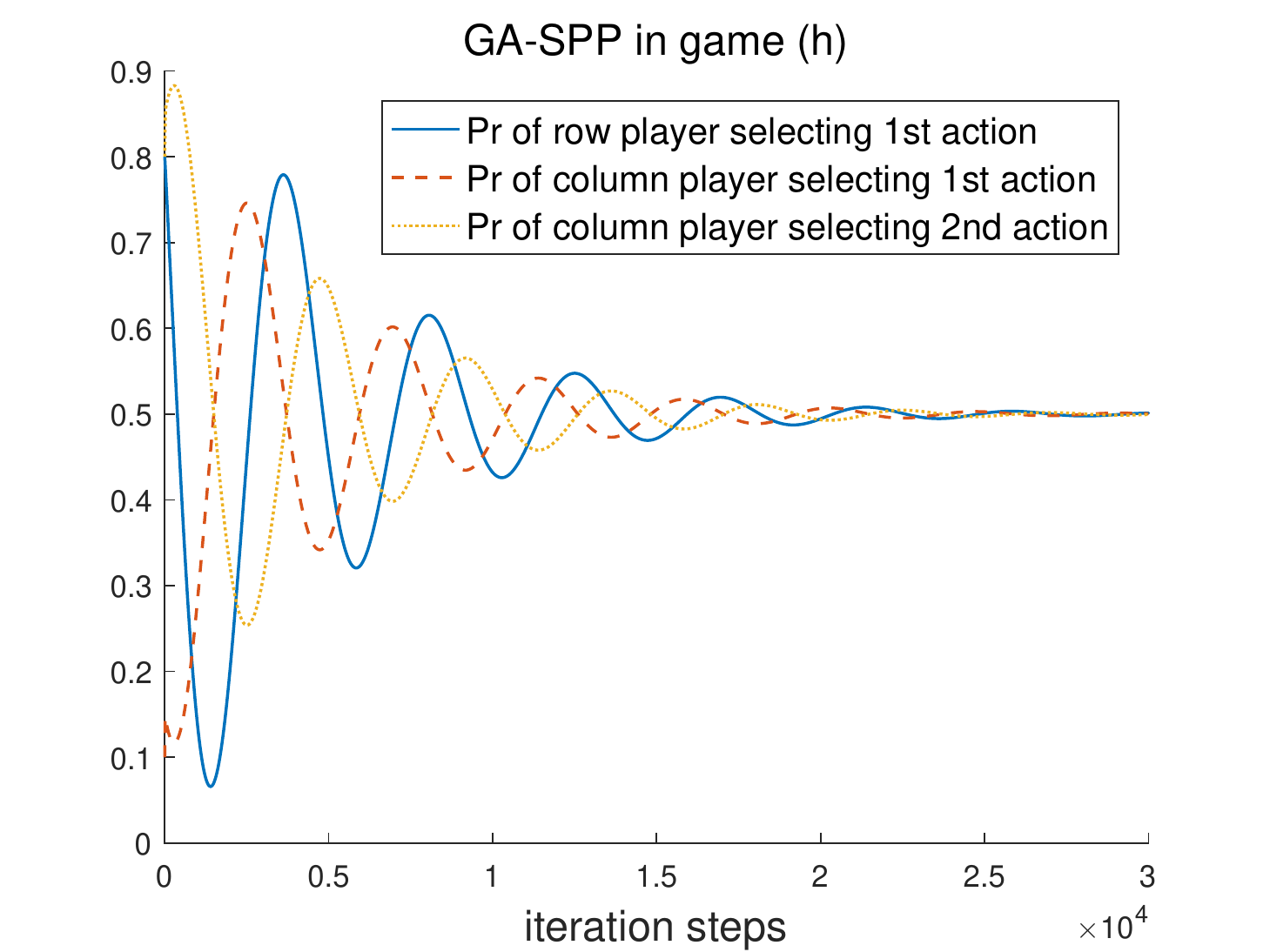}
    \includegraphics[width=0.48\linewidth]{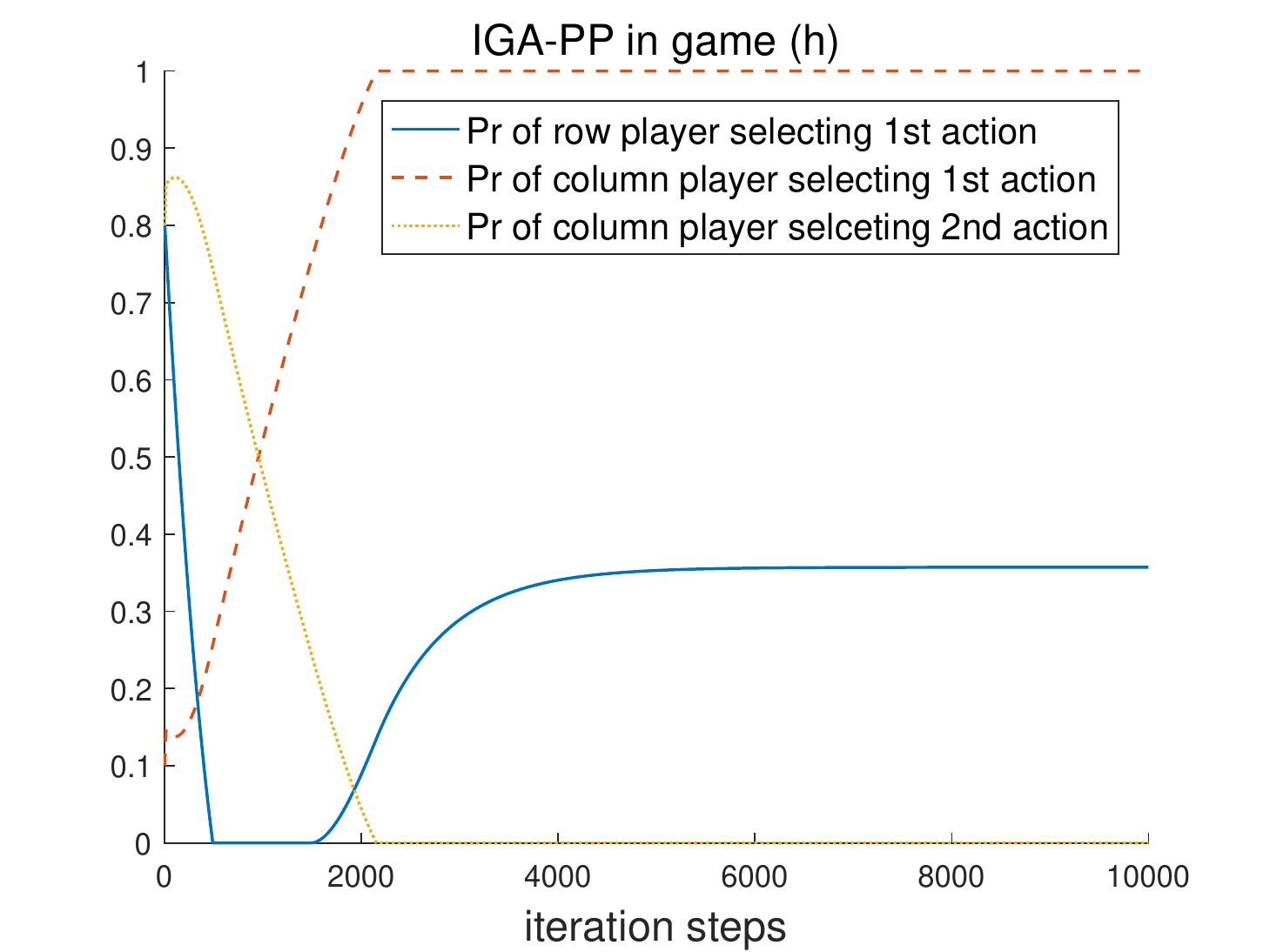}
    \caption{Comparison between GA-SPP, and IGA-PP in a $2\times 3$ game under different prediction lengths. IGA-PP's convergence to Nash Equilibrium is affected by prediction length, while GA-SPP can always converge to NE. Parameters: $\eta=0.001$, $\gamma=0.01$ in upper and  $\gamma=0.1$ in lower. Initial polices: $(0.8,\ 0.2)$ and $(0.1,\ 0.8,\ 0.1)$.}
    \label{fig:a_2_3_game}
\end{figure}

Fig.~\ref{fig:a_2_3_game} shows results of GA-SPP and IGA-PP in a $2\times 3$ game under different prediction lengths. Although IGA-PP can converge, it does not converge to a Nash equilibrium. On the contrary, the strategies lead by GA-SPP successfully converge to Nash equilibrium under different prediction lengths. The essential reason is that GA-SPP projects the predicted strategies to a valid space at every step.

By examining with different learning rates, we observe that GA-SPP often converges faster than GIGA-WoLF. A possible explanation is introduced in~\cite{zhang2010multi}. We do not show these results for sake of space.

\subsection{Problem games}
Although GA-SPP has better performance than other MAL algorithms, the convergence of GA-SPP is not perfect. As shown in Fig.~\ref{fig:three2_game}, in the \emph{three player matching pennies}, GA-SPP cannot converge with a constant prediction length.

\begin{figure}
\setlength{\abovecaptionskip}{0.2cm}
\setlength{\belowcaptionskip}{-0.5cm}
    \includegraphics[width=0.7\linewidth]{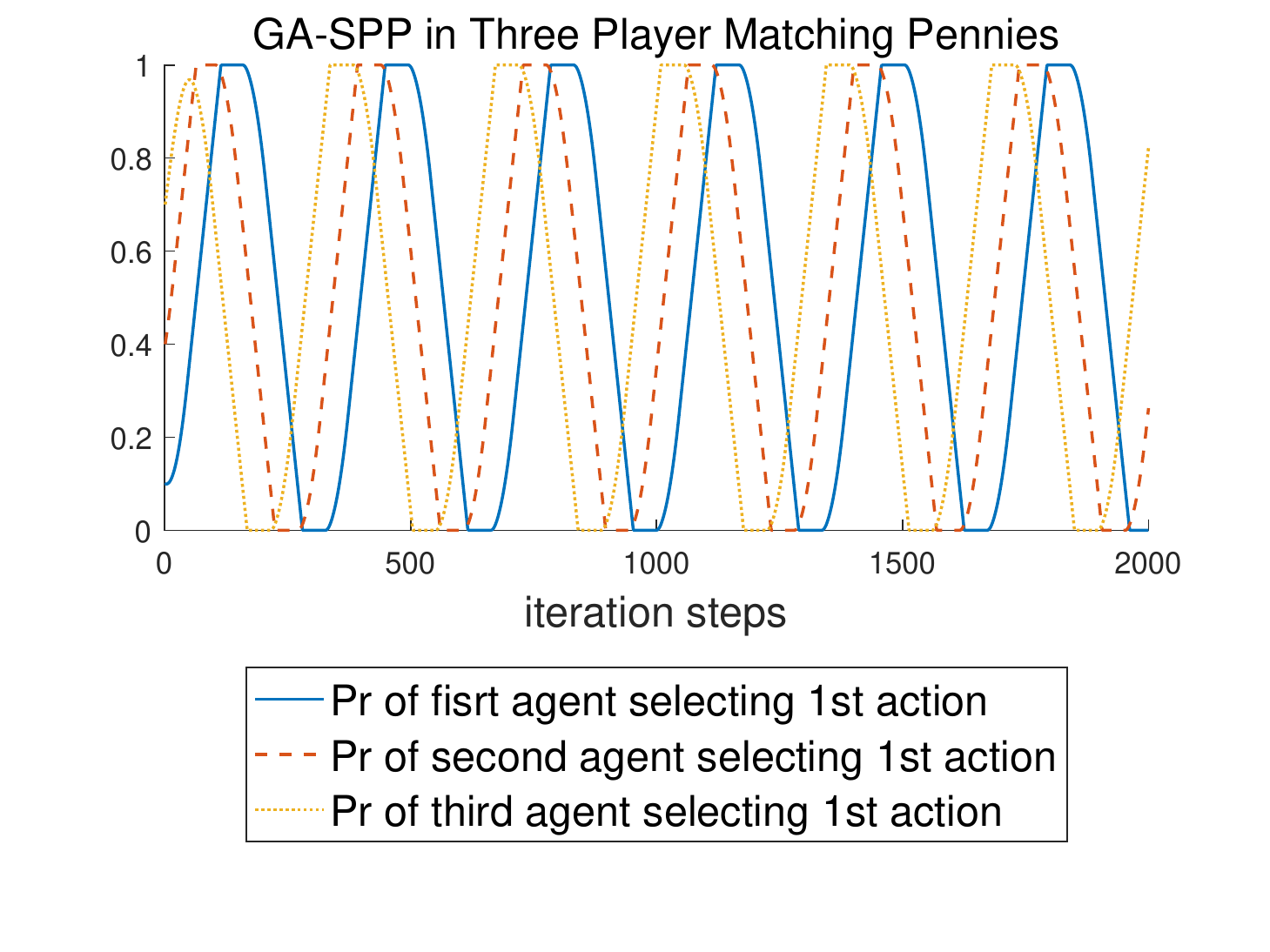}
    \caption{\footnotesize{Following GA-SPP, action probabilities of agents fail to converge in three player matching pennies. Parameters: $\eta=0.001$,  $\gamma=0.3$. Initial polices: $(0.1,\ 0.9)$, $(0.4,\ 0.6)$ and $(0.7,\ 0.3)$.}}
    \label{fig:three2_game}
\end{figure}

This failed case show the difficulties of MAL work and indicate that gradient method may not be the ideal way to handle a complex game. Because dynamic of gradient method in such game is not linearly, the chaotic phenomenon will occur. We may need different approaches to deal with such problems. In order to make MARL work effectively in more cases, it is important to analyze and solve these problems.

\section{Conclusion}
This paper introduced a new gradient-based multi-agent learning algorithm, called gradient-ascent with shrinking policy prediction (GA-SPP). We proved Nash convergence of GA-SPP with a finite step size in three classes of general-sum games: $m\times n$ positive semi-definite games, a subclass of $2\times n$ general-sum games, and $2\times2$ general-sum games, respectively, which provide a stronger theoretical guarantee than existing gradient-based MAL algorithms. We also empirically verified the strong convergence property of GA-SPP with example games. In future work, we aim to relax assumptions of GA-SPP and extend it to stochastic games where each agent only has observations of their in-game payoffs and has no gradient information about other agents.



\bibliographystyle{ACM-Reference-Format}  
\balance  
\bibliography{sample-bibliography}  

\end{document}